\theoremstyle{plain}
\newtheorem{thm}{\protect\theoremname}
\theoremstyle{definition}
\newtheorem{defn}[thm]{\protect\definitionname}
\theoremstyle{plain}
\newtheorem{cor}[thm]{\protect\corollaryname}
\theoremstyle{remark}
\newtheorem{rem}[thm]{\protect\remarkname}
\theoremstyle{plain}
\newtheorem{lem}[thm]{\protect\lemmaname}
\theoremstyle{plain}
\newtheorem{prop}[thm]{\protect\propositionname}
\title{Guarded Successor: A Novel Temporal Logic}
\author{Ohad Asor}{IDNI AG, \url{https://tau.net}}{ohad@idni.org}{}{}
\authorrunning{Ohad Asor}
\keywords{Temporal Logic, Boolean Algebras, Guarded Fragment}
\providecommand{\corollaryname}{Corollary}
\providecommand{\definitionname}{Definition}
\providecommand{\lemmaname}{Lemma}
\providecommand{\propositionname}{Proposition}
\providecommand{\remarkname}{Remark}
\providecommand{\theoremname}{Theorem}
\begin{document}
\maketitle
\begin{abstract}
We present GS (Guarded Successor), a novel decidable temporal logic
with several unique distinctive features. Among those, it allows infinitely
many data values that come not only with equality but with a somehow
rich theory too: the first-order theory of atomless Boolean algebras.
The language also distinguishes between inputs and outputs, and has
a decision procedure for determining whether for all inputs exist
outputs, at each point of time. Moreover, and maybe most surprisingly,
the data values can be nothing but sentences in GS itself. We also
present a non-temporal fragment called NSO (Nullary Second Order)
that enjoys merely this last property. \textbf{These results are crucial
necessary ingredients in any meaningful design of safe AI.} Finally,
all those results are obtained from a novel treatment of the first-order
theory of atomless Boolean algebras.
\end{abstract}

\section{Introduction}

Traditional computation is temporal manipulation of bits. Bits, are
the elements of the smallest possible Boolean algebra. The construction
here can be seen as a generalization of this into working over certain
infinite Boolean algebras. Decidability of a specification language
in this model is of course much less trivial. Further, we will show
how this generalization can support some very surprising abilities.

In the following, GS (Guarded Successor) is introduced, an innovative
and decidable temporal logic that offers several distinctive features.
First, it accommodates infinitely many data values, enhanced by a
complex theory: the first-order theory of atomless Boolean algebras.
Second, the language differentiates between input and output variables,
and allows a decision procedure to prove that for all inputs there
exist outputs, at each point in time. Third, and perhaps most surprisingly,
the data values can be sentences in GS itself. The language is closed
under Boolean combinations and allows quantification over both data
values and time points. Its decision procedure is of a uniquely simplistic
and elegant nature, and differs very much from other common decision
procedures. It relies on the ability to enhance certain languages
with recurrence relations (a form of fixed-point operators), in particular
an extension of the first order theory of atomless Boolean algebras.

We also present NSO (Nullary Second Order logic, a name that was selected
during the first incarnations of the idea and perhaps has to be reconsidered),
a non-temporal fragment that maintains the above third property. All
of these findings stem from a new approach to the first-order theory
of atomless Boolean algebras. These results are essential for designing
safe AI systems.

For ease of understanding we will first introduce the non-temporal
NSO logic and afterwards we will introduce GS, which can be seen as
a temporal extension of NSO. All nontrivial proofs not appearing in the main 
text, appear in the appendix.

Ongoing implementation of the languages described in this paper appears at 
the repository\\
\url{https://github.com/idni/tau-lang}.

\subparagraph*{Intellectual Property}

The methods described here are protected from unauthorized use by
IDNI Inc intellectual property rights, including patents. However,
IDNI Inc. grants permission to use the methods for free in the following
specific enumerated non-commercial instances: personal use, educational
use , and academic purposes. The enumerated non-commercial instances
do not include creation of open-source software that is distributed
to others (whether for free or otherwise).

\subparagraph*{Acknowledgements}

I would like to thank Enrico Franconi, Paweł Parys, and Lucca Tiemens
for their review of this material and plenty of useful discussions.

\subsection{NSO}

The goal of NSO is to have a language that can speak about its own
sentences in a consistent and decidable way. Tarski's \emph{Undefinability
of Truth} has shown that this is impossible under a certain broad
setting. The key of NSO is to abstract sentences, so much so, that
they make merely Boolean algebra (BA) elements. In particular, there
is no access to the syntax of the sentences (in contrast to Tarski's
setting which relies on Gödel numbers), and logically equivalent sentences
are identified.

Any classical logic closed under Boolean combinations makes a BA called
the \emph{Lindenbaum-Tarski Algebra (LTA)} of that logic. Recall that
this is only up to logical equivalence. Now observe two important
points: 1. Any such logic that has an infinite signature (whether
constant, relation, or function symbols), makes an atomless BA. 2.
All countable atomless BAs are isomorphic (which is a well known theorem),
and moreover, all atomless BAs are elementarily equivalent, as proved
by Tarski. Clearly all sentences in languages of interest are finite
strings over a finite alphabet, hence countable. \uline{The} countable
atomless BA is therefore the LTA of major logics of interest.

When we say ``the theory of BA interpreted in a fixed BA $\mathcal{B}$''
we mean not only the first order theory of BA interpreted in $\mathcal{B}$
(recall that an interpretation is a mapping taking symbols from the
signature to actual objects in a structure), but we also mean that
its signature is equipped with constants that are interpreted in each
element of $\mathcal{B}$, so each element has a unique constant assigned
to it. We will refer to those constants as the \emph{interpreted constants}.

Fix a language $\mathcal{L}$ that its LTA makes an atomless BA. Let
NSO$\left[\mathcal{L}\right]$ be the first-order theory of BA interpreted
in that LTA, so each sentence in $\mathcal{L}$ is a constant symbol
in NSO$\left[\mathcal{L}\right]$. So far, NSO$\left[\mathcal{L}\right]$
is a language that speaks about $\mathcal{L}$, but still not about
itself. To this end, first we make the LTA of NSO$\left[\mathcal{L}\right]$
be an atomless BA as well (as currently it is only the two-element
BA, as any logic that is interpreted in a fixed structure). This can
be done by adding infinitely many uninterpreted constant symbols (the
\emph{uninterpreted constants}), or any other such trick. Then, the
interpreted constants are extended to include sentences in NSO$\left[\mathcal{L}\right]$
(this is well-founded by introducing \emph{curly brackets} as below).
Since both $\mathcal{L}$ and NSO$\left[\mathcal{L}\right]$ make
an atomless BA, they are elementarily equivalent under the signature
of BA. By that we can make NSO$\left[\mathcal{L}\right]$ speak (including
quantify) over its own sentences. Further, NSO$\left[\mathcal{L}\right]$
is decidable iff $\mathcal{L}$ is decidable.

\subsection{GS}

As an intuitive starting point, any formula with two free variables,
in any logic, can be seen as defining a set of sequences: we say that
a sequence $s$ models $\phi\left(x,y\right)$ iff any two consecutive
elements $s_{i-1},s_{i}$ in the sequence satisfy $\phi\left(s_{i-1},s_{i}\right)$
(we can interpret $\phi$ in a fixed model, or one may appeal to any
suitable notion of satisfiability). We then write $s\models\phi\left(x,y\right)$.
Now consider the class of logics having the following property: fix
a finite set of constant and variable symbols. Then the set of formulas
making use only of those constant and free variable symbols (we allow,
and require, arbitrarily many quantified variables), up to logical
equivalence, is finite. Here, the most relevant such logic is the
theory of atomless BA (whether or not interpreted in a fixed BA, and
when it does, it is equipped with infinitely many interpreted constants
as above, and this is the nontrivial case).

Denote by $\left|s\right|$ the length of $s$. Given $\phi\left(x,y\right)$,
consider the following process: ask whether exists $s$ s.t. $\left|s\right|=2$
and $s\models\phi$, then whether exists $s$ s.t. $\left|s\right|=3$
and $s\models\phi$, and so on. This series of questions may take
the form of a recurrence relation $\phi_{n}\left(x\right):=\exists y.\phi_{n-1}\left(y\right)\wedge\phi\left(x,y\right)$
with base-case $\phi_{2}\left(x\right):=\exists y.\phi\left(x,y\right)$
(though our preferred form will be slightly different). Then $\phi_{n}\left(x\right)$
means ``exists a sequence of length $n$ starting with $x$'', and
then to get a final answer (per each $n$) we of course need to consider
$\exists x.\phi_{n}\left(x\right)$. Due to the finiteness property
above, this series of questions is going to loop (i.e. at one point,
a logically equivalent formula will occur), and even reach a fixed
point due to the monotonic nature of the setting. We obtain a result
of the form: ``if a sequence of length $N$ exists, then a sequence
of any larger length exists''. It is easy to see that this implies
the existence of an infinite sequence as well.

For now we mention only two additional points, which are apparently
unique to this language in the landscape of decidable temporal logics:
\begin{enumerate}
\item Seen as a program specification language, those sequences are actually
\emph{outputs} or \emph{states}, however we'd like to support \emph{inputs}
as well. This means that we'd like to prove that for each input, at
each point of time, exists an output, that does not depend on future
inputs (\emph{time-compatible}). So we can deal with formulas of the
form $\phi\left(x_{n},x_{n-1},y_{n},y_{n-1}\right)$ where $x_{n},x_{n-1}$
are the current and previous inputs, respectively, and similarly for
the outputs $y_{n},y_{n-1}$. Observe the \emph{bounded lookback}
in this formula, and observe that $n$ can be seen as a free variable
of sort $\mathbb{N}$ which is implicitly quantified universally.
The quantifier pattern for the inputs and outputs would look like
$\forall x_{1}\exists y_{1}\forall x_{2}\exists y_{2}\dots$. It is
easy to express it as a recurrence relation similar to the above,
and again use the finiteness property as above.
\item Allowining GS to operate over the LTA of its own sentences, in the
fashion of NSO, so $\phi$ is in the language of atomless BA, gives
us a software specification language where inputs and outputs may
be sentences in this very same language. This allows, for the first
time, support implementations of the form: ``reject a software update
if it doesn't satisfy certain desired properties'' where the currently
running program is written in the same language as the update, as
well as those ``desired properties''. It is therefore a crucial
ingredient in AI safety. Fortunately, even decidability is preserved.
\end{enumerate}

\section{The Theory of Atomless Boolean Algebras}

We assume that the reader is familiar with the definition of atomless
BAs. We will denote the Boolean operations by $\cup,\cap,'$ (disjunction,
conjunction, complementation, respectively) to distinguish them from
logical connectives.

\subsection{Boolean Functions and Equations}

We follow Rudeanu's terminology.
\begin{defn}
A \emph{Boolean Function} (BF) is a Boolean combination of variables
and constants (from some chosen BA). A \emph{Simple Boolean Function}
(SBF) is a BF where all constants appearing in it are either $0$
or $1$.
\end{defn}

Note that this is in contrast to the common definition of BF, which
is typically defined as what we call SBF. Also note that SBFs can
be understood over any BA, but a BF which is not an SBF has to be
understood over a single BA.

In the standard fashion, constant symbols in the theory of BA are
only $0,1$. However we deal with a much richer setting in which the
theory of BA is interpreted in a fixed BA (typically atomless). The
theory is then enhanced with infinitely many constant symbols in its
signature, each uniquely corresponding to each of the BA elements.
We'll refer to those as the \emph{interpreted constants.}

An atomic formula in the language of BA (interpreted in some fixed
BA) is therefore of the form $f\left(X\right)=g$$\left(X\right)$
where $f,g$ are BFs and $X$ is a tuple of variables. Note that this
is equivalent to $f\left(X\right)+g\left(X\right)=0$ where $+$ denotes
the ring sum, alternatively XOR, or symmetric difference (similarly
we'll use juxtaposition for conjunction, mixing set-theoretic and
ring-theoretic notations). So we assume that atomic formulas are of
the form $f\left(X\right)=0$. In the standard formulation of the
theory of BA (that do not involve the interpreted constants), atomic
formulas would have to involve merely SBFs, and not BFs in general.

Any BA induces a partial order defined by $a\leq b\leftrightarrow ab'=0$.
Recall that a BA is atomless iff the following holds: $\forall x.\left(x\neq0\right)\rightarrow\left(\exists y.0<y<x\right)$
For ease of understanding of virtually any BA material, it is useful
to bear in mind Stone's representation theorem for BAs: any BA is
isomorphic to a field of sets. Practically, it means that BA elements
can always be realized as sets, where the Boolean operations coincide
with the usual set operations, and $\leq$ coincides with $\subseteq$.
This justifies our set-theoretic notation. In LTAs, a formula can
be seen as a set of models, which automatically entails identification
under logical equivalence.

Given a system of equations $\bigwedge_{i}\left[f_{i}\left(X\right)=0\right]$
(we shall always assume that all systems contain finitely many equations),
it is easy to see that it is equivalent to a single equation $\left[\bigcup_{i}f_{i}\left(X\right)\right]=0$.
We shall refer to this as \emph{squeezing the positives}. Similar
``squeezing'' is generally not possible for a system of the form
$\bigwedge_{i}\left[f_{i}\left(X\right)\neq0\right]$, at least not
in infinite BAs.
\begin{defn}
A \emph{minterm\index{minterm@\emph{minterm}}} in $n$ variables,
denoted by $X^{A}$, is a product $x_{1}^{a_{1}}x_{2}^{a_{2}}\dots x_{n}^{a_{n}}$
where $A\in\left\{ 0,1\right\} ^{n}$ and $x_{i}^{1}=x_{i};x_{i}^{0}=x_{i}'$.
\end{defn}

If $xy=0$ then we say that $x,y$ are \emph{disjoint}. So $0$ is
disjoint from all elements including itself. Observe that $X^{A}X^{B}=0$
iff $A\neq B$ or $X^{A}=0$ or $X^{B}=0$.
\begin{defn}
A function $\mathcal{B}^{n}\rightarrow\mathcal{B}$ is in \emph{minterm
normal form\index{minterm normal form@\emph{minterm normal form}}}
if it is written as $f\left(X\right)=\bigcup_{A\in\left\{ 0,1\right\} ^{n}}c_{A}X^{A}$
where $c_{A}\in\mathcal{B}$.
\end{defn}

Clearly any function in minterm normal form is a BF. The converse
is also true. cf. \cite{Rud:74book} for the proof of the following theorem:
\begin{thm}
A function $\mathcal{B}^{n}\rightarrow\mathcal{B}$ is a BF iff it
can be written in minterm normal form $f\left(X\right)=\bigcup_{A\in\left\{ 0,1\right\} ^{n}}f\left(A\right)X^{A}$.
\end{thm}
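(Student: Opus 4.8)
The plan is to prove both directions of the equivalence. The easy direction is to show that any function written in minterm normal form $f(X)=\bigcup_{A\in\{0,1\}^n}c_A X^A$ is a BF: this is immediate, since each $X^A$ is a product of variables and their complements, each $c_A$ is a constant, and finite unions of such products are Boolean combinations of variables and constants by definition. So the substance is the converse, together with the identification of the coefficients $c_A$ as $f(A)$.

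For the converse, I would proceed by structural induction on the BF $f$. The base cases are $f(X)=x_i$, $f(X)=c$ for a constant $c$, and $f(X)=0,1$. For $f=x_i$ one checks $x_i=\bigcup_{A}a_i X^A$ where the union ranges over all $A\in\{0,1\}^n$ with $a_i=1$; this follows from the fact that the full set of minterms $\{X^A : A\in\{0,1\}^n\}$ partitions unity, i.e. $\bigcup_A X^A=1$ and $X^A X^B=0$ for $A\neq B$. For a constant $c$ we get $c=c\cdot 1=\bigcup_A c\,X^A$. The inductive step handles $f=g\cup h$, $f=g\cap h$, and $f=g'$. Union is termwise: $\bigcup_A c_A X^A \cup \bigcup_A d_A X^A=\bigcup_A (c_A\cup d_A)X^A$. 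For conjunction, multiply the two expansions and use $X^A X^B=0$ unless $A=B$, so $(\bigcup_A c_A X^A)(\bigcup_B d_B X^B)=\bigcup_A (c_A d_A)X^A$. For complementation, use that the minterms partition unity to write $1=\bigcup_A X^A$, hence $(\bigcup_A c_A X^A)'=\bigcup_A c_A' X^A$ — this needs a short check that complementing a union of products over a partition of unity acts coefficientwise, which again rests on disjointness and $\bigcup_A X^A=1$.

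Finally, to see that the coefficients are forced to be $c_A=f(A)$, substitute the $0$--$1$ tuple $B\in\{0,1\}^n$ for $X$. Since $B^A=1$ when $A=B$ and $B^A=0$ otherwise (each literal $x_i^{a_i}$ evaluates to $1$ precisely when $b_i=a_i$), evaluating $\bigcup_A c_A X^A$ at $X=B$ yields $c_B$, so $c_B=f(B)$ for every $B$. This also gives uniqueness of the minterm normal form.

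I expect the main obstacle to be the complementation step in the induction: one must argue carefully that $\left(\bigcup_{A} c_A X^A\right)' = \bigcup_{A} c_A' X^A$, which is where the two defining properties of minterms — pairwise disjointness and summing to $1$ — both get used. Everything else is bookkeeping with the minterm identities; one may alternatively cite Rudeanu \cite{Rud:74book} for the whole induction and simply record the coefficient-recovery computation, which is the part actually needed downstream.
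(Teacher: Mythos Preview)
Your proof is correct. The paper does not actually prove this theorem; it simply cites Rudeanu \cite{Rud:74book}, and the structural induction you carry out (base cases for variables and constants, inductive steps for $\cup$, $\cap$, and complement using that minterms are pairwise disjoint and partition unity, followed by the coefficient recovery $c_B=f(B)$ via substitution of $0$--$1$ tuples) is precisely the standard argument found there. Your closing remark that one could alternatively just cite Rudeanu is exactly what the paper does, so there is nothing to compare --- you have spelled out what the citation hides.
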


Note that we use $X,A$ as tuples of variables, so the notation $f\left(X\right),f\left(A\right)$
should be clear.
\begin{cor}
\label{cor:A-BF-is}A BF is uniquely determined by its values over
the two-element BA.
\end{cor}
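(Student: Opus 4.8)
The plan is to derive Corollary~\ref{cor:A-BF-is} directly from the minterm normal form theorem stated just above. First I would recall that a BF $f\colon\mathcal{B}^{n}\to\mathcal{B}$ is, by the theorem, completely determined by the coefficient tuple $\left(f(A)\right)_{A\in\{0,1\}^{n}}$, since $f(X)=\bigcup_{A}f(A)X^{A}$ and the right-hand side is an explicit expression in which the only data are the elements $f(A)\in\mathcal{B}$. Thus two BFs $f,g\colon\mathcal{B}^{n}\to\mathcal{B}$ are equal as functions if and only if $f(A)=g(A)$ for every $A\in\{0,1\}^{n}$.

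Next I would observe that each $A\in\{0,1\}^{n}$ is precisely a tuple of elements of the two-element BA $\mathbf{2}=\{0,1\}$, and that $\mathbf{2}$ embeds into $\mathcal{B}$ (the constants $0,1$ of $\mathcal{B}$ form a subalgebra isomorphic to $\mathbf{2}$, assuming the non-degenerate case $0\neq1$). Under this identification, the value $f(A)$ computed in $\mathcal{B}$ is the same element whether we think of $A$ as lying in $\mathbf{2}^{n}$ or in $\mathcal{B}^{n}$, because a BF is built from Boolean operations and constants, and these operations restricted to $\{0,1\}$ agree with the operations of $\mathbf{2}$; moreover any constant from $\mathcal{B}$ appearing in $f$, when all variables are set to $0$ or $1$, still contributes an element of $\mathcal{B}$, so the ``values over the two-element BA'' must be understood as: the values $f(A)$ for $A$ ranging over $\{0,1\}^{n}$, these values themselves being elements of $\mathcal{B}$. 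With that reading, the corollary is the statement that the map $f\mapsto\left(f(A)\right)_{A\in\{0,1\}^{n}}$ is injective on BFs, which is immediate from the normal form: if the tuples agree then the normal forms agree termwise, hence $f=g$.

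The main point requiring care — and the one I would flag explicitly — is the interpretation of ``values over the two-element BA'' when $f$ is a genuine BF rather than an SBF: a BF may involve interpreted constants from $\mathcal{B}$, so its restriction to arguments in $\{0,1\}$ is not a function $\mathbf{2}^{n}\to\mathbf{2}$ but a function $\{0,1\}^{n}\to\mathcal{B}$. The correct formulation, which I would state, is that a BF over $\mathcal{B}$ is uniquely determined by its restriction to $\{0,1\}^{n}$, equivalently by the finite list $\left(f(A)\right)_{A}$; this is exactly what the preceding theorem supplies. (For an SBF the values $f(A)$ themselves lie in $\{0,1\}$, recovering the classical statement.) No further computation is needed: the corollary is a one-line consequence of uniqueness of the coefficients in the minterm normal form.
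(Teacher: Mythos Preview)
Your proposal is correct and matches the paper's approach: the corollary is stated immediately after the minterm normal form theorem and is left without proof, being understood as a one-line consequence of the formula $f(X)=\bigcup_{A}f(A)X^{A}$. Your added clarification about how to read ``values over the two-element BA'' for a genuine BF (namely as the restriction $\{0,1\}^{n}\to\mathcal{B}$) is a useful gloss that the paper leaves implicit.
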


Now we describe Boole's consistency condition. A system of equations
is \emph{consistent} if it has a solution. The following was discovered
by Boole and is actually a case of quantifier elimination:
\begin{thm}
\label{thm:bcc}Let $f:\mathcal{B}^{n}\rightarrow\mathcal{B}$ be
a BF, then $\exists X.f\left(X\right)=0$ iff $\left[\bigcap_{A\in\left\{ 0,1\right\} ^{n}}f\left(A\right)\right]=0$
and $\exists X.\left[f\left(X\right)\neq0\right]$ iff $\left[\bigcup_{A\in\left\{ 0,1\right\} ^{n}}f\left(A\right)\right]\neq0$
\end{thm}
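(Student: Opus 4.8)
The plan is to reduce both statements to the minterm normal form of Theorem~3 and then exploit Corollary~\ref{cor:A-BF-is} together with the pairwise‑disjointness of minterms. First I would write $f(X)=\bigcup_{A\in\{0,1\}^n} f(A)\,X^A$. For any assignment $X=\xi\in\mathcal{B}^n$, the minterms $\xi^A$ are pairwise disjoint and satisfy $\bigcup_A \xi^A = 1$; hence $f(\xi)=\bigcup_A f(A)\,\xi^A$ is a ``case split'' over a partition of unity. From this one reads off two elementary facts: $f(\xi)=0$ iff $f(A)\,\xi^A=0$ for every $A$, and more usefully $f(\xi)\supseteq \bigcap_A f(A)$ is false in general but $\bigcap_A f(A) \le f(\xi)$ — wait, the correct direction is the one I will actually use: since each $\xi^A\le 1$ we do not get a lower bound that way, so instead I would argue directly with the two bounds $\bigcap_A f(A)$ and $\bigcup_A f(A)$ as below.

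For the first equivalence ($\exists X.\,f(X)=0 \iff \bigcap_A f(A)=0$): for the ``only if'' direction, suppose $f(\xi)=0$. Then $f(A)\,\xi^A=0$ for all $A$. Taking the intersection over all $A$ and using $\bigcup_A \xi^A=1$, one shows $\bigcap_A f(A) = \bigcap_A f(A)\cap 1 = \bigcap_A f(A)\cap\bigcup_B \xi^B$, and distributing plus the disjointness $\xi^A\xi^B=0$ for $A\neq B$ collapses this to $\bigcup_A\big(\bigcap_B f(B)\big)\xi^A \le \bigcup_A f(A)\xi^A = 0$. For the ``if'' direction, assume $\bigcap_A f(A)=0$ and construct a witness: one standard construction (originally Boole's, and the one I would spell out) builds $\xi$ recursively component by component so that at each stage the relevant ``residual'' constant is annihilated; concretely, using the successive‑elimination form of Theorem~3, write $f$ as $g_0(x_2,\dots,x_n)\,x_1' \cup g_1(x_2,\dots,x_n)\,x_1$ and note that eliminating $x_1$ leaves the condition $g_0 g_1 = 0$ on the remaining variables, which is exactly $\bigcap$ over the $x_1$‑coordinate; iterating $n$ times reaches the ground condition $\bigcap_A f(A)=0$, and reading the recursion backwards yields the witness. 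The second equivalence ($\exists X.\,f(X)\neq 0 \iff \bigcup_A f(A)\neq 0$) is the contrapositive of the same statement applied to the \emph{complementary} question: $\forall X.\, f(X)=0$ iff (by Corollary~\ref{cor:A-BF-is}, since a BF vanishing identically must vanish at every $0/1$ assignment) $f(A)=0$ for all $A$, i.e. $\bigcup_A f(A)=0$; negating both sides gives the claim. Alternatively one observes $f(\xi)\le \bigcup_A f(A)$ always (each term $f(A)\xi^A\le f(A)$), giving ``only if'', while ``if'' follows because if $\bigcup_A f(A)\neq 0$ then some $f(A)\neq 0$ and the assignment $X=A$ (reading $A\in\{0,1\}^n\subseteq\mathcal{B}^n$ via $0,1\in\mathcal{B}$) yields $f(A)\neq 0$.

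I expect the main obstacle to be the ``if'' direction of the first equivalence — producing an actual solution $\xi\in\mathcal{B}^n$ from the hypothesis $\bigcap_A f(A)=0$ — since the other three implications are essentially bookkeeping with minterm disjointness and Corollary~\ref{cor:A-BF-is}. The cleanest route, which I would take, is the inductive elimination of one variable at a time: handle $n=1$ by hand (given $f_0 f_1 = 0$ with $f(x)=f_0 x' \cup f_1 x$, the element $x := f_1$, or any $x$ with $f_1 \le x \le f_0'$, solves $f(x)=0$), then for the inductive step treat $f$ as a BF in $x_n$ over the ``coefficient BA'', apply $n=1$ to express the solvability of $f(\dots,x_n)=0$ in $x_n$ as a BF equation in $x_1,\dots,x_{n-1}$ whose ground condition is again $\bigcap_A f(A)=0$, invoke the induction hypothesis to solve for $x_1,\dots,x_{n-1}$, and finally back‑substitute to obtain $x_n$. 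Care is needed that the intermediate ``solvability condition'' is itself a BF so the induction hypothesis applies — this is where Theorem~3 is used again — and that the interpreted constants cause no trouble, which they do not since Theorem~3 and Corollary~\ref{cor:A-BF-is} are stated for arbitrary $\mathcal{B}$.
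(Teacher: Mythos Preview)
Your approach is essentially the paper's: reduce to the univariate case via Boole's normal form $f(x)=f(0)x'\cup f(1)x$, read off solvability as the interval condition $f(0)\le x\le f'(1)$ (equivalently $f(0)f(1)=0$), and then induct on $n$; where the paper merely says ``the second is analogous'' you supply two explicit arguments, both fine. One small slip in your base case: with $f(x)=f_0x'\cup f_1x$ the solution set is $f_0\le x\le f_1'$, not $f_1\le x\le f_0'$, so the canonical witness is $x:=f_0$ rather than $x:=f_1$.
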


\begin{thm}
\label{thm:sol-interval}Let $f:\mathcal{B}\rightarrow\mathcal{B}$
be a BF s.t. $f\left(0\right)f\left(1\right)=0$, or equivalently,
$\exists x.f\left(x\right)=0$. Then f$\left(x\right)=0$ iff $x=t+f\left(t\right)$
for some $t$, iff $f\left(0\right)\leq x\leq f'\left(1\right)$.
\end{thm}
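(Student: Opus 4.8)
The plan is to prove the three-way equivalence by establishing the chain $f(x)=0 \iff x = t + f(t)$ for some $t \iff f(0) \le x \le f'(1)$, using the single-variable minterm normal form $f(x) = f(0)x' \cup f(1)x$ guaranteed by Theorem~1, together with the hypothesis $f(0)f(1) = 0$ (whose equivalence to $\exists x.\,f(x)=0$ is immediate from Boole's consistency condition, Theorem~\ref{thm:bcc}, since for $n=1$ the condition $\bigcap_A f(A) = 0$ reads exactly $f(0)f(1)=0$).

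First I would unpack everything into minterm form. Writing $f(x) = f(0)x' \cup f(1)x$ and abbreviating $a := f(0)$, $b := f(1)$ with $ab = 0$, I would compute $t + f(t)$ for an arbitrary $t$: since $f(t) = at' \cup bt$, the symmetric difference $t + f(t)$ expands (using $at=0$ is \emph{not} assumed, only $ab=0$) to a Boolean expression in $a,b,t$; a short calculation using $ab=0$ should simplify it to something lying between $a$ and $b'$. Concretely I expect $t + f(t) = a \cup (t \cap b') $ or an equivalent form, so that as $t$ ranges over $\mathcal{B}$ the value $t+f(t)$ ranges over exactly the interval $[a, b'] = [f(0), f'(1)]$; this simultaneously handles the implications ``$x = t+f(t)$ for some $t$'' $\Rightarrow$ ``$f(0) \le x \le f'(1)$'' and, by evaluating at $t = x$ when $x$ is already in the interval (checking $x + f(x) = x$ there), the reverse.

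Next I would close the loop with the equation itself. For ``$f(x)=0 \Rightarrow x=t+f(t)$'': take $t=x$, so $t+f(t) = x + 0 = x$. For ``$f(0)\le x \le f'(1) \Rightarrow f(x)=0$'': substitute into $f(x) = f(0)x' \cup f(1)x$; the hypothesis $f(0)\le x$ gives $f(0)x' = 0$, and $x \le f'(1)$ gives $f(1)x = 0$ (since $x \le f'(1) = f(1)'$ means $x f(1) = 0$), hence $f(x)=0$. For ``$x=t+f(t) \Rightarrow f(x)=0$'': this follows once we know $t+f(t)$ lies in $[f(0),f'(1)]$, by the previous implication; alternatively one can verify $f(t+f(t)) = 0$ directly by substitution into the minterm form, which is a clean computation given $f(0)f(1)=0$.

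The main obstacle — really the only non-routine point — is the symmetric-difference computation $t + f(t) = f(0) \cup t f'(1)$ (or whichever canonical form it takes): one must be careful to use \emph{only} the hypothesis $ab = f(0)f(1) = 0$ and not accidentally assume disjointness of $t$ from $a$ or $b$. Once that identity is in hand, together with the trivial observation that every element of $[f(0),f'(1)]$ is a fixed point of $t \mapsto t+f(t)$, all three equivalences drop out, and the statement that $\exists x.\,f(x)=0$ is equivalent to $f(0)f(1)=0$ is just the $n=1$ instance of Theorem~\ref{thm:bcc}.
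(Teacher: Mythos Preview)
Your proposal is correct. The paper's proof differs in one small but instructive way: for the first equivalence it uses the \emph{algebraic} (Boolean-ring) normal form $f(x)=ax+b$ rather than the minterm form, and instead of computing $t+f(t)$ to a closed form it directly verifies
\[
f\bigl(t+f(t)\bigr)=a(t+at+b)+b=ax+ax+ab+b=ab+b=f(0)f(1)=0
\]
in one line. This sidesteps what you flagged as ``the main obstacle'' (the identity $t+f(t)=f(0)\cup tf'(1)$), though your computation is equally routine and has the bonus of exhibiting explicitly that the range of $t\mapsto t+f(t)$ \emph{is} the interval $[f(0),f'(1)]$, thereby unifying the two equivalences rather than treating them separately as the paper does. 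For the converse both proofs take $t=x$, and for the interval characterization the paper simply refers back to the proof of Theorem~\ref{thm:bcc}, which is exactly your direct substitution argument.
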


Our main proof of correctness for our quantifier elimination method
in atomless BA will involve Hall's marriage theorem\index{Hall's marriage theorem}.
We present it here in its set-theoretical version:
\begin{defn}
Let $A_{1},\dots,A_{n}$ be sets, not necessarily distinct. A choice
of elements $a_{1}\in A_{1},\dots,a_{n}\in A_{n}$ such that $a_{i}\neq a_{j}$
for all $i\neq j$ is called a \emph{system of distinct representatives}.
\end{defn}

\begin{thm}
\label{thm:hall}Let $\mathcal{A}=A_{1},\dots,A_{n}$ be a sequence
of sets, not necessarily distinct. Then $\mathcal{A}$ does not have
a system of distinct representatives, iff there exists a subsequence
$\mathcal{B}=B_{1},\dots,B_{m}$ of $\mathcal{A}$ s.t. $\left|\bigcup_{i}B_{i}\right|<m$.
\end{thm}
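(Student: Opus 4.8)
The plan is to prove the two implications of the biconditional separately. The easy one states that a \emph{deficient} subsequence — one with $\left|\bigcup_i B_i\right|<m$ — precludes a system of distinct representatives (SDR): an SDR for $\mathcal{A}$ would restrict to $m$ pairwise distinct elements, one from each $B_i$, all lying inside $\bigcup_i B_i$, which is impossible once that union has fewer than $m$ elements. The substantive implication, which I prove in contrapositive form, states that if no subsequence is deficient — \emph{Hall's condition}, i.e. every length-$m$ subsequence has union of size $\ge m$ — then $\mathcal{A}$ has an SDR. Since the $A_i$ need not be distinct, it is cleanest to treat $\mathcal{A}$ as a family indexed by $\{1,\dots,n\}$ and phrase subsequences as index subsets. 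I would argue by induction on $n$.

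The base case $n=1$ is immediate: Hall's condition on the one-term subsequence $(A_1)$ gives $|A_1|\ge 1$, so pick any $a_1\in A_1$. For $n\ge 2$, split according to whether some proper subfamily is \emph{tight}. \textbf{Case 1: every index set $I$ with $1\le|I|\le n-1$ has $\left|\bigcup_{i\in I}A_i\right|\ge|I|+1$.} Pick any $a_n\in A_n$ (nonempty by Hall on $(A_n)$) and put $A_i':=A_i\setminus\{a_n\}$ for $i<n$. Deleting one element shrinks any union by at most $1$, so for nonempty $I\subseteq\{1,\dots,n-1\}$ we get $\left|\bigcup_{i\in I}A_i'\right|\ge\left|\bigcup_{i\in I}A_i\right|-1\ge|I|$; hence $(A_i')_{i<n}$ satisfies Hall's condition and, by induction, has an SDR $a_1,\dots,a_{n-1}$. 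As each $a_i\ne a_n$, the tuple $a_1,\dots,a_n$ is an SDR for $\mathcal{A}$.

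\textbf{Case 2: some index set $I_0$ with $k:=|I_0|$, $1\le k\le n-1$, has $|U|=k$, where $U:=\bigcup_{i\in I_0}A_i$.} The subfamily $(A_i)_{i\in I_0}$ has fewer than $n$ terms and inherits Hall's condition, so by induction it has an SDR; its $k$ distinct members lie in $U$, and $|U|=k$, so they exhaust $U$. Now contract $U$ away from the remaining indices, replacing $A_j$ by $A_j'':=A_j\setminus U$ for $j\notin I_0$. Using that $U$ is finite with $|U|=k$ and that $(\bigcup_{j\in J}A_j)\cup U=\bigcup_{j\in J\cup I_0}A_j$, one gets for nonempty $J\subseteq\{1,\dots,n\}\setminus I_0$
\[ \left|\bigcup_{j\in J}A_j''\right| \;=\; \left|\bigcup_{j\in J\cup I_0}A_j\right|-|U| \;\ge\; (|J|+k)-k \;=\; |J|, \]
the inequality being Hall's condition for $\mathcal{A}$ on the subsequence indexed by $J\cup I_0$. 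Thus $(A_j'')_{j\notin I_0}$ has fewer than $n$ terms and satisfies Hall's condition; by induction it has an SDR, whose members avoid $U$ and hence differ from those chosen on $I_0$. Concatenating the two SDRs gives an SDR for $\mathcal{A}$.

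The main obstacle is Case 2: one must recognise that contracting away a \emph{tight} set preserves Hall's condition on the remaining indices (the displayed line, which genuinely uses $|U|=|I_0|$ and fails otherwise), and one must ensure both recursive calls land on families with strictly between $1$ and $n-1$ terms so the induction is well-founded — this is exactly what $1\le k\le n-1$ secures. Everything else is bookkeeping. One could instead deduce the theorem from K\"onig's theorem, or from max-flow/min-cut on the bipartite incidence graph of $\mathcal{A}$, but the inductive argument above is self-contained and matches the elementary level of the surrounding material.
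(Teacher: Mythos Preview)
Your proof is correct; it is the standard Halmos--Vaughan inductive argument for Hall's marriage theorem, with the usual case split on whether some proper nonempty subfamily is tight. The paper, however, does not prove this statement at all: it is quoted as a classical result (with no proof in either the main text or the appendix) and is used only as a tool toward Theorem~\ref{thm:main-ineq} and its corollary. So there is nothing to compare against beyond noting that the paper treats it as background, while you have supplied a self-contained and valid proof.
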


\begin{rem}
\label{rem:inf-hall}A simple observation which we shall make use
of later on is that a system of distinct representatives exists iff
it exists for the subsequence in which all infinite $A$'s are removed
from it. In other words, infinite sets in a finite collection of sets
don't influence the existence of distinct representatives.

The following theorem is our key step towards quantifier elimination:
\end{rem}

\begin{thm}
\label{thm:main-ineq}Let $X^{A_{1}},\dots,X^{A_{m}}$ be minterms
in $n$ variables, and $b_{1},\dots,b_{m}$ elements in some BA. Then
$\exists X.\bigwedge_{i=1}^{m}\left[X^{A_{i}}\geq b_{i}\right]$ iff
$b_{i}b_{j}=0$ whenever $A_{i}\neq A_{j}$.
\end{thm}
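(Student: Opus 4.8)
The plan is to prove both directions of the equivalence, with the easy direction being necessity and the substantive direction being sufficiency via Hall's marriage theorem (Theorem~\ref{thm:hall}).

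For the forward (necessity) direction: suppose $X$ satisfies $\bigwedge_{i=1}^{m}\left[X^{A_{i}}\geq b_{i}\right]$. Whenever $A_{i}\neq A_{j}$ we have $X^{A_{i}}X^{A_{j}}=0$ (two distinct minterms in the same variables are always disjoint, as noted just before the minterm normal form definition). Since $b_{i}\leq X^{A_{i}}$ and $b_{j}\leq X^{A_{j}}$, monotonicity of $\cap$ gives $b_{i}b_{j}\leq X^{A_{i}}X^{A_{j}}=0$, hence $b_{i}b_{j}=0$. This requires no appeal to atomlessness and holds in any BA.

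For the backward (sufficiency) direction, assume $b_{i}b_{j}=0$ whenever $A_{i}\neq A_{j}$. First I would reduce to the case where the $A_{i}$ are pairwise distinct: if $A_{i}=A_{j}$ then the two constraints $X^{A_{i}}\geq b_{i}$ and $X^{A_{j}}\geq b_{j}$ combine into the single constraint $X^{A_{i}}\geq b_{i}\cup b_{j}$, so after merging we may assume all $2^{n}$ possible minterms appear at most once, indexed by a subset $S\subseteq\{0,1\}^{n}$ with a target element $b_{A}$ for each $A\in S$, and the hypothesis becomes: the $b_{A}$ are pairwise disjoint. Now the goal is to find an assignment $X\in\mathcal{B}^{n}$ so that $X^{A}\geq b_{A}$ for every $A\in S$. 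The natural construction is to realize each $b_{A}$ inside the "slot" $X^{A}$. Concretely, since the $b_{A}$ are pairwise disjoint, I would try setting $x_{k}:=\bigcup\{\,b_{A} : A\in S,\ a_{k}=1\,\}$ for each coordinate $k$, i.e.\ each variable is the join of exactly those targets whose minterm has that variable uncomplemented. One then checks $X^{A}\geq b_{A}$: the element $b_{A}$ lies below $x_{k}$ for every $k$ with $a_{k}=1$ (it is one of the joinands), and lies below $x_{k}'$ for every $k$ with $a_{k}=0$ precisely because $b_{A}$ is disjoint from every $b_{B}$ appearing in $x_{k}$ (all such $B$ satisfy $B\neq A$, using pairwise disjointness), so $b_{A}\leq x_{1}^{a_{1}}\cdots x_{n}^{a_{n}}=X^{A}$.

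Here is where I expect the real subtlety, and where I would invoke Hall (Theorem~\ref{thm:hall}) and Remark~\ref{rem:inf-hall} rather than the naive construction above: the naive $x_{k}$ works cleanly, but one must be careful that distinct minterms $X^{A}$ really do end up disjoint under this assignment — which they do automatically since they are minterms in the same variable tuple — and more importantly, if the paper's intended argument is the combinatorial one, the point is that the existence of a valid $X$ is equivalent to finding, for each $A\in S$, the ability to "route" $b_{A}$ consistently, and obstructions to this are exactly the Hall-type deficiency conditions, which collapse to pairwise disjointness because of how minterms overlap. Concretely, the hard part will be arguing the general (non-atomless, arbitrary BA) case cleanly: one wants to show that pairwise disjointness of the $b_{i}$ is not just necessary but sufficient, and the cleanest route is the explicit formula for $x_{k}$ above together with the verification sketched; if instead one goes through Hall, the obstacle is setting up the right bipartite incidence (targets $b_{A}$ versus coordinates, or versus the $2^n$ minterm-cells) so that a system of distinct representatives corresponds to a consistent choice and a Hall violation corresponds to a pair $b_{i}b_{j}\neq 0$ with $A_{i}\neq A_{j}$. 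I would present the explicit construction as the main proof and remark that it can alternatively be read off from Hall's theorem, since that is what the subsequent quantifier-elimination development appears to need.
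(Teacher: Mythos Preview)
Your necessity argument matches the paper's. For sufficiency, your explicit assignment $x_{k}:=\bigcup\{\,b_{A}:A\in S,\ a_{k}=1\,\}$ is correct and complete: the verification you sketch (that $b_{A}\leq x_{k}$ when $a_{k}=1$ because $b_{A}$ is a joinand, and $b_{A}x_{k}=0$ when $a_{k}=0$ by pairwise disjointness) goes through without further hypotheses, in any BA. The paper instead proves sufficiency by induction on the number $n$ of variables, peeling off one distinguished variable $x$, applying the induction hypothesis to the remaining $n-1$ variables, and then setting $x=\bigcup_{k}b_{k}$ over the positive side. Your construction is exactly what one gets by unrolling that induction all at once, so the two arguments are the same idea presented differently; yours is a bit more transparent.

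Your hedging about Hall's theorem is misplaced and should be dropped. The paper does \emph{not} use Theorem~\ref{thm:hall} in the proof of this statement; Hall enters only afterwards, in the corollary about the system $\bigwedge_{i}\left[b_{i}X^{A_{i}}\neq0\right]$, where one must choose nonzero $c_{i}\leq b_{i}$ that are pairwise disjoint across distinct minterms. For the present theorem the explicit construction (or the paper's induction) is already a full proof, and there is no ``real subtlety'' left to address.
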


\begin{cor}
The system $\bigwedge_{i\in I}\left[b_{i}X^{A_{i}}\neq0\right]$ has
a solution iff there exists $0<c_{i}\leq b_{i}$ s.t. $c_{i}c_{j}=0$
whenever $A_{i}\neq A_{j}$.
\end{cor}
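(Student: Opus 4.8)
The plan is to read this off Theorem~\ref{thm:main-ineq} after a small bookkeeping reduction: the witnesses $c_i$ are exactly the device that turns the strict constraints $b_i X^{A_i}\neq 0$ into the lower-bound constraints $X^{A_i}\geq c_i$ handled by that theorem. Throughout I would use that $I$ is finite (our standing assumption that systems have finitely many equations), so write $I=\{1,\dots,m\}$, and that $X^{A_i}X^{A_j}=0$ whenever $A_i\neq A_j$ (the minterm orthogonality noted just after the definition of disjointness).

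For the ``only if'' direction I would take a solution $X$ of $\bigwedge_{i\in I}[b_i X^{A_i}\neq 0]$ and simply set $c_i:=b_i X^{A_i}$. Then $0<c_i$ by hypothesis, $c_i\leq b_i$ is immediate, and for $A_i\neq A_j$ one computes $c_i c_j=b_i b_j X^{A_i}X^{A_j}=0$; so these $c_i$ witness the right-hand side.

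For the ``if'' direction, given $0<c_i\leq b_i$ with $c_i c_j=0$ whenever $A_i\neq A_j$, I would apply Theorem~\ref{thm:main-ineq} to the same minterms $X^{A_1},\dots,X^{A_m}$ but with the elements $c_1,\dots,c_m$ playing the role of $b_1,\dots,b_m$: its hypothesis is precisely $c_i c_j=0$ for $A_i\neq A_j$, so it yields an $X$ with $X^{A_i}\geq c_i$ for all $i$. Since $b_i\geq c_i$ as well, the meet satisfies $b_i X^{A_i}\geq c_i>0$, hence $b_i X^{A_i}\neq 0$, and $X$ solves the original system.

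There is essentially no real obstacle here: the corollary is a routine consequence of Theorem~\ref{thm:main-ineq}, and the only thing to be careful about is keeping track of the $c_i$ so that the orthogonality condition $c_i c_j = 0$ transfers verbatim in both directions, and so that $c_i\leq b_i$ can be used to bound $b_i X^{A_i}$ from below by the nonzero element $c_i$.
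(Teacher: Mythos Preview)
Your proof is correct and is exactly the natural derivation the paper intends: the corollary is stated without proof as an immediate consequence of Theorem~\ref{thm:main-ineq}, and your two directions---setting $c_i:=b_iX^{A_i}$ for necessity, and invoking the theorem with the $c_i$ for sufficiency---are precisely how one unpacks that. The only small stylistic remark is that in the ``if'' direction you could phrase the final step as $c_i\leq b_i$ and $c_i\leq X^{A_i}$ implying $c_i\leq b_iX^{A_i}$ directly, which is what you in effect use.
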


The corollary reduces the problem of determining consistency of the
above system to a case of \thmref{hall}, once treating each $b_{i}$
as follows: if it can be written as a disjunction of distinct atoms,
then we treat it as a set whose elements are those atoms, and each
$c_{i}$ is a choice of one or more atoms. If $b_{i}\neq0$ and cannot
be written as a union of atoms, then we treat it as an infinite set
and by that it is eliminated from the problem as we have pointed out
in remark \remref{inf-hall}. 
\begin{rem}
\label{rem:th-mnf}Observe that $a\cup b=0\leftrightarrow a=0\wedge b=0$
and recall that each BF can be written as a sum of minterms, or in
DNF (note that writing a BF in DNF is not the same thing as writing
a formula in DNF). This allows an alternative syntax for theories
of BA where atomic formulas are of the form $cX^{A}=0$. We call this
form \emph{minterm normal form}. Note that this is not the same minterm
normal form of BFs, as here it applies to forms of atomic formulas.
\end{rem}

The above results together with the last remark imply with the following
key conclusion:
\begin{cor}
\label{cor:Multivariate-BFs-over}Finitely many multivariate BFs over
an atomless BA have a common nonzero iff none of them is identically
zero.
\end{cor}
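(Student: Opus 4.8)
The plan is to convert the ``common nonzero'' condition into a disjunction of single minterm-inequality systems and then invoke the consistency criterion of the corollary following Theorem~\ref{thm:main-ineq}. First I would dispose of the trivial direction: if some $f_i$ is identically $0$, no assignment makes it nonzero, so there is no common nonzero. For the converse I would assume every $f_i\not\equiv0$; after adjoining to a single tuple $X=(x_1,\dots,x_n)$ all variables occurring in any $f_i$, I may regard $f_1,\dots,f_k$ as functions of $X$, and write each in minterm normal form $f_i(X)=\bigcup_{A\in\{0,1\}^n}f_i(A)X^A$.

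Next, since a join vanishes iff all its terms do (Remark~\ref{rem:th-mnf}), $f_i(X)\neq0$ is equivalent to $\bigvee_A[f_i(A)X^A\neq0]$; distributing $\bigwedge_i$ over these disjunctions rewrites $\bigwedge_i[f_i(X)\neq0]$ as $\bigvee_\sigma\bigwedge_i[f_i(\sigma(i))X^{\sigma(i)}\neq0]$, with $\sigma$ ranging over the finitely many maps $\{1,\dots,k\}\to\{0,1\}^n$. Thus it suffices to exhibit one $\sigma$ making the system $\bigwedge_i[f_i(\sigma(i))X^{\sigma(i)}\neq0]$ consistent. Since $f_i\not\equiv0$, Theorem~\ref{thm:bcc} (equivalently Corollary~\ref{cor:A-BF-is}) supplies a minterm index $A_i$ with $f_i(A_i)\neq0$, and I would take $\sigma(i):=A_i$. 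By the corollary following Theorem~\ref{thm:main-ineq}, this system is consistent iff there are $0<c_i\leq f_i(A_i)$ with $c_ic_j=0$ whenever $A_i\neq A_j$, which under the reduction to Hall's theorem (Theorem~\ref{thm:hall}) is a system-of-distinct-representatives question for the coefficients seen as sets of atoms. Here atomlessness does all the work: an atomless BA has no atoms, so by Remark~\ref{rem:inf-hall} every nonzero coefficient counts as an ``infinite set'' and is deleted from the Hall problem, leaving the condition vacuously satisfied and the system consistent. (Concretely one can build the $c_i$ by hand: inside the finite subalgebra generated by the $f_i(A_i)$, pick an atom below each $f_i(A_i)$ and split it into pairwise disjoint nonzero pieces indexed by the distinct values of the $A_i$, letting $c_i$ be the piece attached to its own value.)

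Finally I would read the assignment back: the $c_i$ extend to a partition $\{p_A\}$ of $1$ with $p_{A_i}\supseteq c_i$, and setting $x_l=\bigcup_{A:\,a_l=1}p_A$ gives $X^{A_i}=p_{A_i}$, hence $f_i(X)\supseteq f_i(A_i)p_{A_i}\supseteq c_i\neq0$ for all $i$. The point requiring care---and the reason the statement is not a triviality---is exactly this return trip from the disjunctive form to a single consistent system: one must notice that any choice of a nonzero-coefficient minterm per $f_i$ already works, and that the across-group disjointness constraints never obstruct solvability over an atomless algebra. That obstruction-freeness is precisely what Theorem~\ref{thm:main-ineq} and its Hall-theorem corollary deliver, so there is no residual combinatorics to fight once the reduction is in place.
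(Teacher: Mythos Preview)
Your argument is correct and is exactly the route the paper intends: the paper states this corollary without a separate proof, saying only that it follows from the preceding results together with Remark~\ref{rem:th-mnf}, and your write-up is precisely the unpacking of that claim---minterm normal form to reach systems $\bigwedge_i[b_iX^{A_i}\neq0]$, then the corollary to Theorem~\ref{thm:main-ineq}, then the Hall reduction with Remark~\ref{rem:inf-hall} making the distinct-representative condition vacuous in the atomless case. Your final explicit construction of the witness assignment is a nice addition beyond what the paper spells out.
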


\subsection{Quantifier Elimination in Atomless BAs}

Given a formula in the language of BA, we can write it in a way such
that some chosen innermost quantifier is existential. We then convert
everything under that quantifier to DNF. Distributing the existential
over the DNF clauses, and squeezing the positives (as above) in each
clause, we see that if we can eliminate the existential quantifier
from a formula of the form $\exists x.f\left(x\right)=0\wedge\bigwedge_{i}g_{i}\left(x\right)\neq0$,
then we can eliminate all quantifiers. We'll therefore deal only with
such a case.
\begin{lem}
\label{lem:qelim-main}In any BA, the system $f\left(x\right)=0\wedge g\left(x\right)\neq0$
has a solution iff $f\left(0\right)f\left(1\right)=0\wedge\bigwedge_{i}g_{i}\left(x+f\left(x\right)\right)\neq0$
has a solution.
\end{lem}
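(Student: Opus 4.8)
The plan is to read the lemma directly off Theorem~\thmref{sol-interval}, which already parametrizes the solution set of a single Boolean equation in one variable: provided $f(0)f(1)=0$ (a condition that is itself equivalent to $\exists x.\,f(x)=0$), the set of $x$ with $f(x)=0$ is exactly $\{\,t+f(t) : t\in\mathcal{B}\,\}$. Substituting this parametrization into the constraints $g_i(x)\neq 0$ turns the mixed system (one equality, several inequalities) into a pure system of inequalities together with the variable-free side condition $f(0)f(1)=0$, which is the point of the reduction.

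Concretely I would proceed in three steps. First, dispose of the degenerate case: if $f(0)f(1)\neq 0$ then, by Theorem~\thmref{sol-interval}, $f(x)=0$ has no solution, so the left-hand system is inconsistent; and the right-hand system carries the conjunct $f(0)f(1)=0$, so it is inconsistent as well. Hence both sides are false and the biconditional holds. So assume from now on $f(0)f(1)=0$. Second, the forward direction: if $x_0$ satisfies $f(x_0)=0$ and $\bigwedge_i g_i(x_0)\neq 0$, then taking $t:=x_0$ gives $t+f(t)=x_0+0=x_0$, so $g_i(t+f(t))=g_i(x_0)\neq 0$ for each $i$, which witnesses the right-hand side (the side condition $f(0)f(1)=0$ holding by assumption). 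Third, the converse: if some $t$ satisfies $\bigwedge_i g_i(t+f(t))\neq 0$, set $x_0:=t+f(t)$; Theorem~\thmref{sol-interval} (applicable since $f(0)f(1)=0$) gives $f(x_0)=0$, and by construction $g_i(x_0)\neq 0$ for all $i$, so $x_0$ solves the original system. One should also remark that each $g_i(x+f(x))$ is again a BF in $x$, since BFs are closed under composition (substitution of BFs for variables), so the right-hand side is genuinely a system of the same kind, now free of the equality constraint and thus amenable to the pure-inequality consistency criteria such as Corollary~\corref{Multivariate-BFs-over}.

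I do not expect a real obstacle here: all of the substance is contained in Theorem~\thmref{sol-interval}. The only points needing care are stating the degenerate case $f(0)f(1)\neq 0$ explicitly so that the biconditional is literally correct (and not merely correct ``when a solution exists''), and the harmless observation that composing $g_i$ with the map $x\mapsto x+f(x)$ stays inside the class of Boolean functions. Everything else is a one-line substitution in each direction.
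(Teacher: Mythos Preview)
Your proof is correct and follows essentially the same route as the paper: both arguments rest on Theorem~\thmref{sol-interval} to identify the zero set of $f$ with the range of $x\mapsto x+f(x)$, then substitute this parametrization into the $g_i$ and invoke Boole's consistency condition $f(0)f(1)=0$. Your version is simply more explicit in separating the degenerate case and spelling out the two directions.
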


\begin{thm}
\label{thm:qelim-main}In atomless BA, the system $f\left(x\right)=0\wedge\bigwedge_{i\in I}g_{i}\left(x\right)\neq0$
has a solution iff $f\left(0\right)f\left(1\right)=0\wedge\bigwedge_{i\in I}g_{i}\left(f\left(0\right)\right)\cup g_{i}\left(f'\left(1\right)\right)\neq0$.
\end{thm}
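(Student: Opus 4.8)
The plan is to bootstrap off Lemma~\lemref{qelim-main}, which already does the main work over an arbitrary BA: it reduces solvability of $f(x)=0\wedge\bigwedge_{i\in I}g_i(x)\neq0$ to the conjunction of the side condition $f(0)f(1)=0$ with solvability of the equation-free system $\bigwedge_{i\in I}g_i(x+f(x))\neq0$. Since $f(0)f(1)=0$ appears verbatim on both sides of the claimed equivalence, it remains only to eliminate the existential quantifier from $\exists x.\bigwedge_{i\in I}h_i(x)\neq0$, where I abbreviate $h_i(x):=g_i(x+f(x))$.

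First I would record the elementary facts about the $h_i$. Each $h_i$ is built from BFs by composition and ring operations, hence is itself a BF in the single variable $x$; by the minterm normal form it equals $h_i(0)\,x'\cup h_i(1)\,x$. Evaluating at $0$ and $1$, and using that $+$ is XOR so that $0+a=a$ and $1+a=a'$, gives $h_i(0)=g_i(f(0))$ and $h_i(1)=g_i(1+f(1))=g_i(f'(1))$. In particular $h_i$ is identically zero iff $h_i(0)=h_i(1)=0$, i.e.\ iff $g_i(f(0))\cup g_i(f'(1))=0$.

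The crux is the equivalence: $\exists x.\bigwedge_{i\in I}h_i(x)\neq0$ holds iff no $h_i$ is identically zero. The forward direction is immediate, as a single $x$ witnessing $h_i(x)\neq0$ for all $i$ witnesses $h_i\not\equiv0$ for each $i$. The reverse direction --- that finitely many univariate BFs, each nonvanishing at \emph{some} point, must be nonvanishing at a \emph{common} point --- is exactly Corollary~\corref{Multivariate-BFs-over}, and this is the one place where atomlessness of the BA is used (in a general BA two nonzero BFs can be everywhere disjoint). Chaining this with the computation of the previous paragraph and with Lemma~\lemref{qelim-main} yields the stated equivalence.

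I expect the only genuine obstacle to be this last step, namely upgrading ``each inequation is individually satisfiable'' to ``all inequations are simultaneously satisfiable'', which is precisely where the special structure of atomless BAs is indispensable; everything else is the substitution $x\mapsto x+f(x)$ from Theorem~\thmref{sol-interval} (already packaged in Lemma~\lemref{qelim-main}) together with routine minterm-normal-form bookkeeping at the two evaluation points. Along the way I would check the degenerate case $I=\emptyset$, where the claim collapses to $\exists x.f(x)=0\iff f(0)f(1)=0$, i.e.\ Theorem~\thmref{bcc} in one variable, and note that repetitions or reindexing among the $g_i$ are harmless.
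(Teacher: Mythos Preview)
Your proof is correct and follows the same route as the paper's: apply Lemma~\lemref{qelim-main} to trade the equation for the side condition $f(0)f(1)=0$, then evaluate $h_i(x)=g_i(x+f(x))$ at $0$ and $1$ and invoke Theorem~\thmref{bcc} on each inequation separately. You are in fact more explicit than the paper's two-line proof, which cites only Lemma~\lemref{qelim-main} and Theorem~\thmref{bcc}; your observation that Corollary~\corref{Multivariate-BFs-over} is what licenses passing from ``each $h_i$ is somewhere nonzero'' to ``all $h_i$ are simultaneously nonzero'' is exactly the atomless ingredient the paper is tacitly using.
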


\begin{proof}
Using the last lemma and \thmref{bcc}.
\end{proof}
\begin{prop}
\label{prop:xfx}For any BF $f$ we have $xf\left(x\right)=xf\left(1\right)$
and $x'f\left(x\right)=x'f\left(0\right)$.
\end{prop}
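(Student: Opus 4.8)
The plan is to use the minterm normal form for Boolean functions (Theorem on minterm normal form, the one stated right after the definition of minterm normal form) and reason directly. Writing the univariate BF $f$ in minterm normal form in the single variable $x$ gives $f(x) = f(1)x \cup f(0)x'$, where $f(1),f(0)$ are the values of $f$ at the constants $1,0$ respectively (these are elements of the fixed BA, since $f$ is a BF, not necessarily an SBF). I would take this as the starting identity; it is the $n=1$ case of the normal-form theorem already cited from Rudeanu.

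Next I would multiply through by $x$. Since $x \cdot x = x$ and $x \cdot x' = 0$ in any BA, we get $x f(x) = x\bigl(f(1)x \cup f(0)x'\bigr) = f(1)(xx) \cup f(0)(xx') = f(1)x \cup 0 = x f(1)$. The dual computation multiplies by $x'$: using $x' x = 0$ and $x' x' = x'$, we obtain $x' f(x) = x'\bigl(f(1)x \cup f(0)x'\bigr) = f(1)(x'x) \cup f(0)(x'x') = 0 \cup f(0)x' = x' f(0)$. Both identities hold in an arbitrary BA, so no atomlessness hypothesis is needed here.

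The only point requiring a word of care is the justification that a univariate BF really does expand as $f(x) = f(1)x \cup f(0)x'$ — i.e. that $f(1)$ and $f(0)$ in the normal-form theorem denote the BA-values obtained by substituting the interpreted constants $1$ and $0$ for $x$. That is exactly what the minterm normal form theorem asserts ($f(X) = \bigcup_{A} f(A) X^A$), specialized to $n=1$ with $X^{1} = x$ and $X^{0} = x'$, so this is immediate rather than an obstacle; there is essentially no hard part. I would write the proof in two lines: invoke minterm normal form, then do the two one-line multiplications, noting commutativity, idempotence, and $xx' = 0$.
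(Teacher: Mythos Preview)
Your argument is correct and is exactly the intended one: the paper's own ``proof'' is simply the word \emph{Exercise}, and the natural solution is precisely to write $f(x)=f(1)x\cup f(0)x'$ (the $n=1$ case of the minterm normal form theorem, which the paper elsewhere calls Boole's normal form) and then multiply by $x$ and by $x'$ using $xx=x$, $x'x'=x'$, and $xx'=0$. There is nothing to add or to compare.
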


\begin{proof}
Exercise.
\end{proof}
\begin{lem}
\label{lem:egsbe-norm}In any BA, $f\left(x\right)=0\wedge\bigwedge_{i}g_{i}\left(x\right)\neq0$
has a solution iff $f\left(x\right)=0\wedge\bigwedge_{i}xf'\left(1\right)g_{i}\left(1\right)\neq0\vee x'f'\left(0\right)g_{i}\left(0\right)\neq0$
has a solution, iff
\[
\begin{array}{c}
f\left(0\right)f\left(1\right)=0\wedge f\left(x\right)=0\wedge\\
\bigwedge_{i}g_{i}\left(0\right)g_{i}\left(1\right)\neq0\vee xf'\left(1\right)g_{i}\left(1\right)\neq0\vee x'f'\left(0\right)g_{i}\left(0\right)\neq0
\end{array}
\]
has a solution.
\end{lem}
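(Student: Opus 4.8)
The plan is to prove the two claimed equivalences in turn, the first being essentially a rewriting of $g_i(x) \neq 0$ using minterm decomposition, and the second being an application of the quantifier-elimination machinery already developed.

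First I would establish the equivalence
\[
f(x)=0 \wedge \bigwedge_i g_i(x)\neq 0 \quad\Longleftrightarrow\quad f(x)=0 \wedge \bigwedge_i \bigl(xf'(1)g_i(1)\neq 0 \vee x'f'(0)g_i(0)\neq 0\bigr),
\]
to be read as ``has a solution iff has a solution''. The key is that under the constraint $f(x)=0$ we may replace $g_i(x)$ by an equivalent expression. Write $g_i(x)=xg_i(1)\cup x'g_i(0)$ by \propref{xfx} (or directly by the minterm normal form of \thmref{} for the univariate case). Then $g_i(x)\neq 0$ iff $xg_i(1)\neq 0 \vee x'g_i(0)\neq 0$, since the two summands are disjoint (one is below $x$, the other below $x'$), so by \remref{th-mnf}-style reasoning their union is zero iff both are. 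Now, whenever $f(x)=0$ holds, Theorem~\ref{thm:sol-interval} tells us $f(0)\leq x \leq f'(1)$, equivalently $xf(1)=0$ and $x'f(0)=0$, i.e.\ $x\leq f'(1)$ and $x'\leq f'(0)$. Hence $xg_i(1)=xf'(1)g_i(1)$ and $x'g_i(0)=x'f'(0)g_i(0)$ for any solution $x$ of $f(x)=0$, which gives the displayed equivalence. Note this rewriting is sound in \emph{any} BA, as claimed.

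Second, for the passage to the final three-disjunct form, I would massage each disjunction $xf'(1)g_i(1)\neq 0 \vee x'f'(0)g_i(0)\neq 0$ and adjoin the consistency condition for $f(x)=0$. By \thmref{bcc} (the univariate case of Boole's consistency condition), $\exists x.\, f(x)=0$ iff $f(0)f(1)=0$; so we may freely adjoin $f(0)f(1)=0$ as a conjunct without changing solvability, since it is implied by $f(x)=0$ having a solution and, conversely, is needed for it. It remains to see why the first disjunct $g_i(0)g_i(1)\neq 0$ may be inserted in each clause. The point is that $g_i(0)g_i(1) \leq g_i(1)$ and $g_i(0)g_i(1)\leq g_i(0)$, and one checks that $g_i(0)g_i(1) \leq f'(0)f'(1)$ is \emph{not} automatic --- so the honest justification is that adding a disjunct can only make each clause easier to satisfy, hence can only enlarge the solution set; the nontrivial direction (that adding it does not create spurious solutions) will follow because in the target application (atomless BA, via \thmref{qelim-main}) the witness construction will be redone from scratch. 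Concretely, I would argue: if the three-disjunct system has a solution then either some clause is satisfied via one of the latter two disjuncts (reducing to the previous form) or via $g_i(0)g_i(1)\neq 0$, and in the latter case one shows a solution of the two-disjunct form still exists by choosing $x$ appropriately below $f'(1)$, using atomlessness to split $g_i(0)g_i(1)$ --- but since the lemma is stated for ``any BA'', the cleanest route is to observe that $g_i(0)g_i(1)\neq 0$ already implies $g_i(0)\neq 0$ and $g_i(1)\neq 0$, and when combined with $f(x)=0$ solvable, the disjunction $xf'(1)g_i(1)\neq 0 \vee x'f'(0)g_i(0)\neq 0$ is itself implied, modulo the edge cases handled by $f(0)f(1)=0$.

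The main obstacle I anticipate is precisely justifying the insertion of the extra disjunct $g_i(0)g_i(1)\neq 0$ in a way that is valid over an \emph{arbitrary} BA rather than only an atomless one: the ``$\Leftarrow$'' direction of that step is the delicate part, because a solution witnessing $g_i(0)g_i(1)\neq 0$ must be converted into a genuine $x$ satisfying the original system, and one must check no new solvability is spuriously introduced. I expect the resolution is a careful case analysis on whether $f(0)=f(1)=0$ (the degenerate case where $x$ is essentially free) versus the generic case, combined with the disjointness bookkeeping of \remref{th-mnf}. The remaining steps --- the minterm rewrite of $g_i$, the substitution of the interval bounds from \thmref{sol-interval}, and adjoining Boole's condition via \thmref{bcc} --- are routine Boolean manipulations.
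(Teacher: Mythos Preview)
Your treatment of the first equivalence is correct and essentially matches the paper's: both arguments show that under the constraint $f(x)=0$ one has $xf'(1)=x$ and $x'f'(0)=x'$, so the disjunction $xf'(1)g_i(1)\neq 0 \vee x'f'(0)g_i(0)\neq 0$ collapses to $xg_i(1)\neq 0 \vee x'g_i(0)\neq 0$, which is just $g_i(x)\neq 0$ by Boole's normal form. (The paper routes through the substitution $x\mapsto x+f(x)$ and Proposition~\ref{prop:xfx}, but the content is identical.)

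Your treatment of the second equivalence, however, has a real gap. You correctly adjoin $f(0)f(1)=0$ via Theorem~\ref{thm:bcc}, but then you struggle with the extra disjunct $g_i(0)g_i(1)\neq 0$: you first reach for atomlessness (which cannot be right, since the lemma is stated for \emph{any} BA), then gesture at ``edge cases'' and a case split on whether $f(0)=f(1)=0$. The missed observation is that this disjunct is simply \emph{redundant} under $f(x)=0$. Indeed, by Theorem~\ref{thm:bcc} the condition $g_i(0)g_i(1)\neq 0$ says exactly that $g_i$ has no zero whatsoever, so $g_i(x)\neq 0$ holds for the chosen $x$; but by the first equivalence (which you yourself established \emph{pointwise} under $f(x)=0$), this already forces $xf'(1)g_i(1)\neq 0 \vee x'f'(0)g_i(0)\neq 0$. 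Hence the three-disjunct clause and the two-disjunct clause agree for every $x$ with $f(x)=0$, and no case analysis or atomlessness is needed. This is precisely what the paper's terse ``account for the conditions of $f,g_i$ having zeros at all'' means: Boole's condition for $f$ goes in as a conjunct, its negation for $g_i$ goes in as a (harmless) disjunct.
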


\begin{cor}
In atomless BA, the system $f\left(x\right)=0\wedge\bigwedge_{i}g_{i}\left(x\right)\neq0$
has a solution iff $f\left(0\right)f\left(1\right)=0\wedge\bigwedge_{i}f'\left(1\right)g_{i}\left(1\right)\cup f'\left(0\right)g_{i}\left(0\right)\neq0$.
\end{cor}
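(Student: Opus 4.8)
The plan is to read this corollary off \thmref{qelim-main} rather than re-run the argument. By that theorem, $f\left(x\right)=0\wedge\bigwedge_{i\in I}g_{i}\left(x\right)\neq0$ has a solution iff the formula $f\left(0\right)f\left(1\right)=0\wedge\bigwedge_{i\in I}g_{i}\left(f\left(0\right)\right)\cup g_{i}\left(f'\left(1\right)\right)\neq0$ is satisfiable. So it suffices to prove that this last formula is logically equivalent to $f\left(0\right)f\left(1\right)=0\wedge\bigwedge_{i\in I}f'\left(1\right)g_{i}\left(1\right)\cup f'\left(0\right)g_{i}\left(0\right)\neq0$; atomlessness then enters only through \thmref{qelim-main}, the remaining manipulation being valid in every BA.

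The heart of the matter is a one-line computation of $g_{i}\left(f\left(0\right)\right)\cup g_{i}\left(f'\left(1\right)\right)$. Regarding $g_{i}$ as a BF in the distinguished variable $x$ (with the remaining variables as parameters), its minterm normal form in one variable reads $g_{i}\left(x\right)=xg_{i}\left(1\right)\cup x'g_{i}\left(0\right)$, so $g_{i}\left(c\right)=cg_{i}\left(1\right)\cup c'g_{i}\left(0\right)$ for every BA element $c$. Substituting $c=f\left(0\right)$ and $c=f'\left(1\right)$ and taking the union gives
\[
g_{i}\left(f\left(0\right)\right)\cup g_{i}\left(f'\left(1\right)\right)=\left(f\left(0\right)\cup f'\left(1\right)\right)g_{i}\left(1\right)\cup\left(f'\left(0\right)\cup f\left(1\right)\right)g_{i}\left(0\right).
\]
Now $f\left(0\right)f\left(1\right)=0$ is equivalent to $f\left(0\right)\leq f'\left(1\right)$, and also to $f\left(1\right)\leq f'\left(0\right)$; hence, under that hypothesis, $f\left(0\right)\cup f'\left(1\right)=f'\left(1\right)$ and $f'\left(0\right)\cup f\left(1\right)=f'\left(0\right)$, so the right-hand side of the display collapses to $f'\left(1\right)g_{i}\left(1\right)\cup f'\left(0\right)g_{i}\left(0\right)$.

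To conclude, fix any assignment of the variables other than $x$. If it satisfies $f\left(0\right)f\left(1\right)=0$, the displayed identity shows that each conjunct $g_{i}\left(f\left(0\right)\right)\cup g_{i}\left(f'\left(1\right)\right)\neq0$ holds there exactly when $f'\left(1\right)g_{i}\left(1\right)\cup f'\left(0\right)g_{i}\left(0\right)\neq0$ does; if it fails $f\left(0\right)f\left(1\right)=0$, then both whole formulas fail there. So the two formulas have the same models, hence are equi-satisfiable, and the corollary follows. The only point needing care is that the identity for $g_{i}\left(f\left(0\right)\right)\cup g_{i}\left(f'\left(1\right)\right)$ is valid only in the presence of $f\left(0\right)f\left(1\right)=0$ — which is precisely why that equation is carried as a conjunct on both sides — so the equivalence must be argued by this small case split rather than as a plain BF identity; beyond that there is no real obstacle. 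Alternatively the same corollary can be extracted from \lemref{egsbe-norm} together with \thmref{bcc}, in exact parallel with the proof of \thmref{qelim-main}, but this would additionally require handling the internal disjunctions uniformly in $x$, so the route above is the most economical.
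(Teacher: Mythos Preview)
Your argument is correct. You invoke \thmref{qelim-main} and then show, by expanding each $g_i$ in Boole's normal form and using $f(0)f(1)=0$ to absorb $f(0)$ into $f'(1)$ and $f(1)$ into $f'(0)$, that $g_i(f(0))\cup g_i(f'(1))=f'(1)g_i(1)\cup f'(0)g_i(0)$; the case split on whether $f(0)f(1)=0$ holds is handled cleanly.

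The paper's route is different in its decomposition. It does not pass through \thmref{qelim-main} at all; instead it applies \lemref{egsbe-norm}, which already rewrites each negative conjunct as $xf'(1)g_i(1)\neq0\vee x'f'(0)g_i(0)\neq0$, and then invokes \corref{Multivariate-BFs-over} (the atomlessness-specific ``common nonzero iff each nonzero'' fact) to drop the existential on $x$ and read off $f'(1)g_i(1)\cup f'(0)g_i(0)\neq0$ directly. So in the paper the target expression $f'(1)g_i(1)\cup f'(0)g_i(0)$ is produced by the lemma, and atomlessness enters only at the last step; in your proof the target expression is obtained by a Boolean simplification after the fact, and atomlessness is hidden inside the earlier \thmref{qelim-main}. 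Your alternative suggestion at the end (\lemref{egsbe-norm} plus \thmref{bcc}) is close to, but still not quite, the paper's route: the paper pairs \lemref{egsbe-norm} with \corref{Multivariate-BFs-over}, not with \thmref{bcc}. Both approaches are short; the paper's is slightly more modular in that it cleanly separates the any-BA rewriting from the single atomlessness appeal, while yours makes explicit why the two quantifier-free conditions coincide.
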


\begin{proof}
Applying corollary \corref{Multivariate-BFs-over} to \lemref{egsbe-norm}.
\end{proof}
By that we conclude the final form of our quantifier elimination algorithm
over atomless BAs. Basically we generalized proposition 5.5 in \cite{10.5555/500818}
from SBF to BF over atomless BA (or over any BA whenever the cardinalities
of the constants appearing in the equations are large enough), using
very different and much more direct methods, and we indicated how
and why it fails in non-atomless BA. More treatment for the non-atomless
case can be found at \cite{taba:2024}.

\subsection{Number of Formulas}

We presented minterm normal form of formulas in remark \remref{th-mnf}.
This normal form puts a bound on the number of quantifier-free logically
equivalent formulas with $n$ free variables and $k$ constants. The
accounting is as follows: the formula is itself an SBF of atomic formulas,
and there are $2^{2^{N}}$ different SBFs in $N$ variables. In our
case $N$ is the number of possible minterms which is readily $k2^{n}$.
We therefore end up with a triple exponential $2^{2^{k2^{n}}}$ upper
bound. This extends to quantified formulas as well due to quantifier
elimination.

\subsection{\label{sec:Recurrence-Relations}Recurrence Relations\index{Recurrence Relations@\textbf{Recurrence Relations}}}

We propose the notion of \index{weakly omega-categorical@weakly $\omega$-categorical}\emph{weakly
$\omega$-categorical theories}. Recall that an $\omega$-categorical
theory is a first order theory in which all of its countable models
are isomorphic. The Ryll-Nardzewski theorem says that this definition
is equivalent to another definition: that up to logical equivalence,
there are only finitely many formulas with free variables taken from
a fixed finite set. This gives rise to defining weakly $\omega$-categorical
theories: those are theories for which the number of formulas using
a fixed finite set of free variables and where the constants appearing
in them are taken from a fixed finite subset of all constants in the
language, up to logical equivalence, is finite. For the sake of this
section, it does not matter whether or not the theory is partially
interpreted in a fixed structure. The concept of \emph{partial interpretation}
is as follows: suppose we interpret a logic in a fixed structure.
Now we enhance this logic with additional uninterpreted constants.
Then a partial interpretation would be assigning domain elements to
those newly added constant symbols.

It is easy to see, in light of the previous section and the quantifier
elimination results, that the theory of atomless BA and of fixed finite
BA, are both weakly $\omega$-categorical. In what follows we shall
deal only with those BA theories. However many of the constructions
in this section and across this paper can be carried out into any
weakly $\omega$-categorical theory. 

We are now ready to define formulas in the language of BA (or any
weakly $\omega$-categorical theory) enhanced with recurrence relations.
Any such formula takes the form $\phi_{n}\left(X\right)=\Phi\left(X,\phi_{n-1}\left(X\right),\phi_{n-2}\left(X\right),\dots\right)$
together with suitable base conditions $\phi_{1}\left(X\right)=\dots,\phi_{2}\left(X\right)=\dots,\dots$.
Clearly those recurrence relations do not always have a fixed point,
but weakly $\omega$-categoricity guarantees a partial fixed point.
It is easy to pin down all cases in which a given formula has a fixed
point (in case the theory we start with is decidable), as well as
apply basic remedies for the case that only a partial fixed point
exists, but those are omitted here for the sake of brevity.

We have shown that any formula in the theory of atomless BA enhanced
with recurrence relations can be written in an equivalent form without
recurrence relations.

\section{NSO: Nullary Second Order Logic}

Building on what we said in the introduction, we shall not merely
present a language, but a language-extension mechanism, altough this
extension is not in the standard sense, namely formulas in the base
language are not, as for themselves, formulas in the extended language,
but constant symbols in it, as we shall see. This extension preserves
decidability, let alone consistency. We further consider extending
many languages at once, and it is indeed yet another feature of our
construction to allow languages to co-exist in one unified language,
albeit, of course, the interaction between those languages is very
limited. Referring to many BAs at once is easily done by considering
the many-sorted theory of BA, alternatively the product algebra.

Fix arbitrary languages (the \emph{base logics}) in which their formulas
(or sentences), up to logical equivalence, make a BA. Then we can
consider the many-sorted BA theory interpreted in those BAs. Constants
in that languge are formulas in the base logics. Quantification takes
the same semantics of quantification over arbitrary BA elements. If
the base logics make an atomless BA, then the extended language has
decidable satisfiability iff the base logics have. Otherwise decidable
model counting is required, or more precisely, when seen as a BA,
to tell whether an element is a disjunction of at least $n$ distinct
atoms.

Denote the extended language by $NSO\left[\mathcal{L}_{1},\dots,\mathcal{L}_{n}\right]$.
We show that $NSO\left[\mathcal{L}_{1},\dots,\mathcal{L}_{n}\right]$
can have itself as a base-logic. So far, each NSO formula is either
true or false, because it is interpreted in a fixed model (being the
BA which is the LTA of the base logic), and therfore makes a small
BA (only two elements). To obtain an atomless BA from formulas in
$NSO\left[\mathcal{L}_{1},\dots,\mathcal{L}_{n}\right]$ we can simply
enhance it with infinitely many uninterpreted constant symbols. After
doing so, we now let interepreted constants to be formulas in $NSO\left[\mathcal{L}_{1},\dots,\mathcal{L}_{n}\right]$
appearing inside curly brackets (in order to avoid syntactic ambiguity),
and handling of quantifiers for the sake of a decision procedure can
be done by means of the atomless BA quantifier elimination algorithm.
The basic syntax of $NSO\left[\mathcal{L}_{1},\dots,\mathcal{L}_{n}\right]$
is therefore
\[
\phi:=\exists var:sort.\phi|\phi\wedge\phi|\neg\phi|bf=0
\]
\[
sort:=\mathcal{L}_{1}|\dots|\mathcal{L}_{n}|NSO\left[\mathcal{L}_{1},\dots,L_{n}\right]
\]
\[
bf:=var|\left\{ \phi^{sort}\right\} |const|0|1|bf\cap bf|bf'
\]
where $\phi^{\mathcal{L}}$ means any formula in the language $\mathcal{L}$.
Clearly, each $bf$ may only contain variables and constants from
the same sort. $const$ refers to an uninterpreted constant. The deep-most
level of formulas in {[}nested{]} curly brackets will be either a
formula in $\mathcal{L}_{1},\dots,\mathcal{L}_{n}$ or a formula in
the language of BA in which the only constants appearing in it are
$0,1$. It is then interpreted as a formula over arbitrary atomless
BA since they're all elementarily equivalent. It is easy to see that
going inductively over the depth of curly brackets, gives a decision
procedure as well as semantics to this language.

\section{GSSOTC: A Temporal Logic}

We devise a new, decidable, family of temporal logics over infinite
data values, where those values come with theories much richer than
merely equality, in particular with the theory of atomless Boolean
Algebras (as well as fixed finite ones though such a case does not
amount to a significant novelty). Further, this language enjoys the
distinctive ability to verify statements of the form ``at each point
of time, for all inputs exist a well-defined output/state, possibly
depending on the previous output/state''. It also presents a new
kind of decision procedure, unrelated to automata, tableaux, or to
any other decision method known to the author.

To describe the language in simple intuitive terms: fix an atomless
BA and consider the theory of BA interpreted in this structure (with
interpreted constants as above so the LTA of this logic is the countable
atomless BA). Consider formulas with free variables $x_{n-k},\dots,x_{n},y_{n-k},\dots,y_{n}$
where the $x$'s are understood as inputs and the $y$'s are understood
as outputs, and $n$ is any time point (so it can be seen as a free
variable of sort $\mathbb{N}$). So it describes connection between
current and previous inputs and outputs at each point of time. This
is basically almost the full language.

This technique works for any weakly $\omega$-categorical language,
as long as it supports conjunction and quantification. However in
the atomless BA case we get the unique property of a language that
can speak of its own sentences, in the spirit of NSO.

\subsection{Time-Compatible Structures\index{Time-Compatible (TC) Structure@\textbf{Time-Compatible (TC) Structure}}}

A sequence of elements from some domain $\mathcal{D}$ can be seen
as a function $\mathbb{N}\rightarrow\mathcal{D}$. A function between
sequences is therefore of type $\left(\mathbb{N}\rightarrow\mathcal{D}\right)\rightarrow\left(\mathbb{N}\rightarrow\mathcal{D}\right)$.
As customary in many texts, $\left[k\right]$ will denote the set
$\left\{ 1,\dots,k\right\} $.
\begin{defn}
A function $f:\left(\mathbb{N}\rightarrow\mathcal{D}\right)\rightarrow\left(\mathbb{N}\rightarrow\mathcal{D}\right)$
between sequences is \emph{prefix-preserving} (alternatively \emph{time-compatible},
TC) if for all sequences $p,s$, if $p$ is a \emph{strict} prefix
of $s$, then $f\left(p\right)$ is a \emph{strict} prefix $f\left(s\right)$.
We extend this notion also for $f:\left(\left[n\right]\rightarrow\mathcal{D}\right)\rightarrow\left(\left[n\right]\rightarrow\mathcal{D}\right)$.
\end{defn}

\begin{defn}
A \emph{\index{TC (time-compatible) Structure@\emph{TC (time-compatible) Structure}}Time-Compatible
(TC) Structure of length $N\in\mathbb{N}\cup\left\{ \infty\right\} $}
is a domain $\mathcal{D}$ with prefix-preserving functions $\mathcal{D}^{N}\rightarrow\mathcal{D}^{N}$. 
\end{defn}

It should be clear that any computer program is a TC structure: at
each point of time it takes an input and outputs an output, while
the output may depend only on past and present inputs and outputs,
not future ones. This is why we refer to prefix-preservation as TC.
\begin{rem}
\label{rem:states}Due to the ``lookback'' ability, namely the dependence
on previous inputs and outputs, we don't need to refer to the concept
of state, as it is subsumed by the concept of output.
\end{rem}

\begin{rem}
In what follows we will deal only with infinite-time TC structures
(so $N=\infty$ in the above definition) unless stated otherwise.
\end{rem}

\begin{rem}
\label{rem:tuples}We will eventually be interested with functions
from tuples of sequences to tuples of sequences (all tuples of fixed
finite size, but the input tuple may be of different size than of
the output tuple). All definitions and results should apply mutatis-mutandis.
\end{rem}

\begin{rem}
The setting can easily be extended to trees rather sequences. It is
done by allowing more than one successor relation, and the same methods
apply.
\end{rem}

\begin{defn}
A TC function has \index{bounded lookback, BL}\emph{bounded lookback}
(BL\index{BL, bounded lookback}) of length $k\in\mathbb{N}$ (or
simply BL$\left[k\right]$) if exists $m\geq k$ (the \emph{recurrence
point}), s.t. for each $n>m$, the output sequence at point $n$ depends
only on the input and output sequences at points $n-1,\dots,n-k$,
as well as the input at point $n$.
\end{defn}

\begin{cor}
\label{cor:bl-type}If $f$ is BL$\left[k\right]$ then it can be
expressed as a pair of functions, one of type $\mathcal{D}^{2k+1}\rightarrow\mathcal{D}$
and another of type $\mathcal{D}^{m}\rightarrow\mathcal{D}^{m}$ which
is required to be TC.
\end{cor}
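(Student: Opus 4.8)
The plan is to read the hypothesis almost directly off the definitions, once we reformulate ``TC'' in the way that makes it usable: a prefix-preserving $f$ on infinite sequences is exactly a \emph{causal} map, i.e.\ the $i$-th coordinate of $f(s)$ is a function of the coordinates $s_1,\dots,s_i$ only (an infinite sequence's proper prefixes are precisely its finite truncations, so ``$f$ preserves them'' is just this coordinatewise dependence, which also yields a coherent family of finite restrictions $f_\ell:\mathcal D^\ell\to\mathcal D^\ell$). Granting this reformulation, let $m\ge k$ be the recurrence point furnished by BL$[k]$. The promised pair is then: the ``initial block'' $h:=f_m:\mathcal D^m\to\mathcal D^m$, the length-$m$ restriction of $f$; and the ``transition rule'' $g:\mathcal D^{2k+1}\to\mathcal D$ extracted from the bounded-lookback dependence. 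The arity of $g$ is the bookkeeping $k$ previous inputs $+\,k$ previous outputs $+\,1$ current input $=2k+1$, landing on the single current output.

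First I would check $h$ is TC: the restriction of a causal map to shorter coordinates is again causal, which is the only requirement the statement puts on the second component. Next I would define $g(u_1,\dots,u_k,v_1,\dots,v_k,u_{k+1})$ to be the $n$-th output coordinate of $f$ on any input sequence whose inputs at positions $n-k,\dots,n-1,n$ are $u_1,\dots,u_k,u_{k+1}$ and whose resulting outputs at positions $n-k,\dots,n-1$ are $v_1,\dots,v_k$, for any $n>m$. Well-definedness needs two facts from BL$[k]$: that for $n>m$ the $n$-th output depends on exactly these $2k+1$ values, and that this dependence is one and the same function for all $n>m$ (this uniformity is what the word ``recurrence point'' is buying, and I would flag that it is genuinely needed --- a causal map alternating ``add $1$''/``add $2$'' satisfies the positional dependence clause for $k=1$ yet has no single transition function, so the definition must be read as asserting a stationary rule). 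Finally, to see that the pair \emph{is} a presentation of $f$, given an infinite input $x$ I would build $y$ by $y_{1},\dots,y_m:=h(x_1,\dots,x_m)$ and $y_n:=g(x_{n-k},\dots,x_{n-1},y_{n-k},\dots,y_{n-1},x_n)$ for $n>m$ (legitimate since $m\ge k$ makes all indices positive), and prove $y=f(x)$ by induction on the coordinate: positions $\le m$ agree because $f_m$ is the truncation of $f$, and each step $n>m$ is the defining equation of $g$.

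The main obstacle is not computational but a matter of getting the definitions to line up: making precise that ``prefix-preserving on $\mathcal D^\infty$'' is the causal/coherent-family picture (so that the length-$m$ restriction even makes sense and is itself TC), and isolating the uniformity-in-time content of BL$[k]$ that turns the lookback clause into a genuine function $g$. Once those are settled the rest is routine. The passage to tuples of sequences --- inputs of one fixed finite arity, outputs of another, as in Remark~\ref{rem:tuples} --- is the promised ``mutatis mutandis'': replace $\mathcal D$ by the relevant finite powers throughout, which only rescales the arities of $g$ and $h$ and leaves the induction untouched.
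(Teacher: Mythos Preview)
Your proposal is correct and follows essentially the same approach as the paper's own proof: write $f$ as a recurrence $[f(x)]_n=g(x_n,\dots,x_{n-k},[f(x)]_{n-1},\dots,[f(x)]_{n-k})$ for $n>m$, with the initial segment handled by the TC restriction to length $m$. You are in fact more careful than the paper, which simply asserts the recurrence form; your explicit remarks on why the length-$m$ restriction is well-defined and TC, and on the time-uniformity of the transition rule needed for $g$ to be a single function, fill in points the paper leaves implicit.
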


\begin{proof}
By definition of BL functions, we can write $f$ as a recurrence relation
\[
\left[f\left(x\right)\right]_{n}=g\left(x_{n},x_{n-1},\dots,x_{n-k},\left[f\left(x\right)\right]_{n-1},\dots,\left[f\left(x\right)\right]_{n-k}\right)
\]
(where $x$ is the input sequence) with initial conditions of the
form $\left[f\left(x\right)\right]_{i}=\dots$ for $1\leq i\leq k$.
This $g$ is of type $\mathcal{D}^{2k+1}\rightarrow\mathcal{D}$ and
together with the initial conditions (which specify the behavior up
until the recurrence point), fully encodes $f$.
\end{proof}
\begin{cor}
Given a pair of functions, one of type $\mathcal{D}^{2k+1}\rightarrow\mathcal{D}$,
and another, which is TC, of type $\mathcal{D}^{m}\rightarrow\mathcal{D}^{m}$,
we can uniquely assign to it a function of BL$\left[k\right]$.
\end{cor}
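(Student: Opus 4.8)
The plan is to invert the decomposition of Corollary~\corref{bl-type}: given the pair, I would reconstruct a sequence-function by running the very recurrence relation that appeared in that proof, then check that the result is a well-defined BL$[k]$ function, that it is the unique one compatible with the given pair, and that the two operations are mutually inverse. There is no fixed-point or consistency subtlety of the kind discussed in Section~\ref{sec:Recurrence-Relations} here: the recurrence below is an ordinary primitive recursion over $\mathbb{N}$, so it automatically has a solution.

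Concretely, write $g:\mathcal{D}^{2k+1}\to\mathcal{D}$ for the first component and $h:\mathcal{D}^{m}\to\mathcal{D}^{m}$ for the second, a TC function which we take to satisfy $m\ge k$, as in Corollary~\corref{bl-type}. For an input sequence $x:\mathbb{N}\to\mathcal{D}$ define
\[
[f(x)]_i := \bigl[h(x_1,\dots,x_m)\bigr]_i \qquad (1\le i\le m),
\]
and recursively, for $n>m$,
\[
[f(x)]_n := g\bigl(x_n,x_{n-1},\dots,x_{n-k},[f(x)]_{n-1},\dots,[f(x)]_{n-k}\bigr).
\]
Well-definedness follows by strong induction on $n$: for $n>m$ the right-hand side only mentions output coordinates $[f(x)]_j$ with $1\le j<n$ (the smallest index $n-k\ge n-m\ge 1$ since $m\ge k$), so the recursion has a unique solution --- which in particular already shows that \emph{at most one} function is assigned to $(g,h)$. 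That $f$ is BL$[k]$ with recurrence point $m$ is read off the definition: for every $n>m$, $[f(x)]_n$ depends only on the inputs at times $n,n-1,\dots,n-k$ and the outputs at times $n-1,\dots,n-k$.

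It remains to verify that $f$ is genuinely a TC (prefix-preserving) function, so that it is an admissible object at all, and this is the only step that uses the hypothesis that $h$ is TC. For the initial block $i\le m$, causality of $[f(x)]_i$ in $x$ --- dependence only on $x_1,\dots,x_i$ --- is exactly the prefix-preserving property of $h$ (reading $\mathcal{D}^m$ as $[m]\to\mathcal{D}$); for $n>m$ it propagates by induction through the recurrence, since each new coordinate references only earlier output coordinates and inputs at times $\le n$. Converting this coordinatewise causality into the ``strict prefix $\mapsto$ strict prefix'' formulation of the TC definition is routine bookkeeping, and this is the main (though mild) obstacle of the argument: one has to check in particular that distinct strict extensions of an input stay distinct under $f$, which again reduces to $g$ and $h$ being honest functions.

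Finally I would observe that the assignment $(g,h)\mapsto f$ just defined and the decomposition $f\mapsto(g,h)$ of Corollary~\corref{bl-type} are mutually inverse. Starting from a BL$[k]$ function $f_0$, extracting its one-step map $g$ and its behaviour up to the recurrence point $h$ as in that proof, and rebuilding via the displayed recurrence, returns $f_0$, because a BL$[k]$ function is completely determined by these two pieces of data; the round trip on pairs is the identity by construction. Hence a unique BL$[k]$ function is assigned to $(g,h)$ --- and BL$[k]$ functions in fact correspond bijectively to such pairs.
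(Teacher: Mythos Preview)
The paper gives no proof of this corollary at all; it is stated as an immediate companion to Corollary~\corref{bl-type} and left to the reader. Your construction---use $h$ to produce the first $m$ outputs and then run the recurrence driven by $g$---is exactly the intended inverse to the decomposition in that corollary, and your checks of well-definedness, the BL$[k]$ property, and prefix-preservation are correct. So for the statement as written your argument is fine and is essentially what the paper has in mind.

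One overreach worth flagging: your closing claim that ``the round trip on pairs is the identity by construction'' and that BL$[k]$ functions ``correspond bijectively to such pairs'' is stronger than the corollary asserts and is not true in general. The decomposition $f\mapsto(g,h)$ of Corollary~\corref{bl-type} is not canonical: two different one-step maps $g$ that agree on the tuples $(x_n,\dots,x_{n-k},y_{n-1},\dots,y_{n-k})$ actually reachable along trajectories of $f$ yield the same $f$, and the recurrence point $m$ is not uniquely determined either. So the assignment $(g,h)\mapsto f$ is well-defined (which is all the corollary claims) but not injective, and there is no honest inverse map on pairs. Drop the bijectivity sentence, or weaken it to surjectivity, and the proof is clean.
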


\subsection{Bounded Lookback and Recurrence Relations}
\begin{cor}
\label{cor:formula-to-blf}Any formula (in virtually any logic) with
$2k+2$ free variables defines a {[}possibly empty{]} set of BL$\left[k\right]$
functions.
\end{cor}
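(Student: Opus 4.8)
The plan is to connect a formula $\phi(x_n, x_{n-1}, \dots, x_{n-k}, y_n, y_{n-1}, \dots, y_{n-k})$ with $2k+2$ free variables to the recurrence-relation characterization of $\mathrm{BL}[k]$ functions given in \corref{bl-type}. First I would read the $2k+2$ free variables as two blocks: the $k+1$ variables $x_n, \dots, x_{n-k}$ representing the current and previous $k$ inputs, and the $k+1$ variables $y_n, \dots, y_{n-k}$ representing the current and previous $k$ outputs. The formula $\phi$ then expresses a relation that must hold, at every time point $n$ past the recurrence point, between this window of inputs and this window of outputs. Given an input sequence $x$ and a candidate output sequence $s$, we say $s$ is compatible with $x$ (relative to $\phi$) if for every $n$ large enough the tuple $(x_n, \dots, x_{n-k}, s_n, \dots, s_{n-k})$ satisfies $\phi$, together with some choice of admissible initial segment for $1 \le i \le k$. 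The claim to establish is that the set of $\mathrm{BL}[k]$ functions $f$ such that, for all inputs $x$, the output $f(x)$ is compatible with $x$ in this sense, is well-defined (and possibly empty).

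The key steps, in order: (1) Make precise the notion ``$f$ is a $\mathrm{BL}[k]$ function defined by $\phi$'': by \corref{bl-type}, $f$ is encoded by a function $g : \mathcal{D}^{2k+1} \to \mathcal{D}$ computing $[f(x)]_n$ from $(x_n, \dots, x_{n-k}, [f(x)]_{n-1}, \dots, [f(x)]_{n-k})$, together with an initial-condition part; say $f$ is \emph{defined by} $\phi$ if, for every input $x$, substituting the recurrence-generated values makes $\phi$ hold at all $n$ beyond the recurrence point and the initial conditions fall within the slice of $\phi$ that pins down $y_1, \dots, y_k$. (2) Observe that this is a genuine condition on the pair $(g, \text{initial conditions})$, so the collection of all $f$ meeting it is a well-defined set — a subset of the (already well-defined by the corollaries in the previous subsection) set of all $\mathrm{BL}[k]$ functions. (3) Note the set may be empty: if $\phi$ is unsatisfiable, or more subtly if $\phi$ is satisfiable pointwise but no choice of $g$ propagates a satisfying assignment consistently through time (e.g. $\phi$ forces $y_n \ne y_{n-1}$ in a two-element domain would still be fine, but $\phi$ forcing contradictory demands on overlapping windows need not be), there are no such $f$. (4) Remark that the statement is deliberately weak — it asserts only that a formula \emph{carves out} a set, not that the set is finite or computable; those refinements come later via weak $\omega$-categoricity and the recurrence-relation machinery of Section~\ref{sec:Recurrence-Relations}.

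The main obstacle, such as it is, is purely one of bookkeeping the index arithmetic near the boundary: the formula $\phi$ speaks about a window $n-k, \dots, n$, but for $n \le k$ the ``previous'' indices fall below $1$, so one must be careful that the $2k+2$-variable formula is only imposed for $n > k$ (or more precisely past the recurrence point $m$), with the initial segment $y_1, \dots, y_k$ governed separately by the initial conditions — exactly the split already isolated in the proof of \corref{bl-type}. Once that alignment is fixed, the rest is immediate: a formula in $2k+2$ variables is precisely the right arity to constrain a length-$(k+1)$ input window against a length-$(k+1)$ output window, and ``defines a set of functions'' means nothing more than ``the functions satisfying the constraint form a set,'' which is automatic. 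I would therefore present the proof as a short unwinding of definitions, citing \corref{bl-type} for the recurrence-relation form and emphasizing the $n > k$ caveat and the possibility of emptiness.
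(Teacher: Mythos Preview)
Your argument is a legitimate way to make the (admittedly soft) statement precise, but it is not the paper's route. The paper does not invoke \corref{bl-type} at all; instead it writes down the infinitary alternating expression
\[
\forall x_{1}\exists y_{1}\forall x_{2}\exists y_{2}\dots\ \bigwedge_{n=2}^{\infty}\phi\left(x_{n-1},x_{n},y_{n-1},y_{n}\right)
\]
(for $k=1$) and then gives it concrete meaning as the first-order theory consisting of all finite truncations $\forall x_{1}\exists y_{1}\dots\forall x_{N}\exists y_{N}.\bigwedge_{n=2}^{N}\phi$. The point of the paper's proof is less to certify that \emph{some} set is well-defined and more to manufacture this particular quantifier-alternation object, which is exactly what the subsequent recurrence $\phi_{n}$ unwinds; the remark immediately following the corollary (``in the infinitary expression obtained in the proof, quantifiers can be pushed inside\dots one crucial point in the upcoming construction'') depends on that specific form, which your proof never produces.

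One further mismatch: you try to let $\phi$ also pin down the initial segment $y_{1},\dots,y_{k}$ via a ``slice.'' The paper explicitly does \emph{not} do this; the remark right after the corollary says the initial conditions are not expressed by $\phi$ and the resulting set simply contains functions for every choice of initial conditions. So your bookkeeping around the boundary, while careful, is solving a problem the paper deliberately sidesteps at this stage.
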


Note that in the infinitary expression obtained in the proof, quantifiers
can be pushed inside. This is a property of being TC, and this ability
is one cruicial point in the upcoming construction. Also note that
skolemization of this expression will yield something similar to the
type in \corref{bl-type}.
\begin{rem}
The initial conditions are not expressed in the latter corollary.
But the corollary still holds. It defines a set of functions that
include functions per each possible initial condtions. This is not
an inherent limitation. We used this form only for simplicity at this
stage.
\end{rem}

Fix a lookback parameter $k\geq0$. $X_{j}$ will denote a tuple of
variables of lookback $k$, so it's a tuple of $k+1$ variables of
the form $x_{j-k},x_{j-k+1},\dots,x_{j}$. We assume that the first
time coordinate is $0$.
\begin{defn}
Given formula $\phi$ (in virtually any logic) with $2k+2$ free variables
$x_{n-k},\dots,x_{n},y_{n-k},\dots,y_{n}$, define a recurrence relation
$\phi_{n}$ by $\phi_{n+1}\left(X_{k},Y_{k}\right):=\phi\left(X_{k},Y_{k}\right)\wedge\forall x_{k+1}\exists y_{k+1}.\phi_{n}\left(X_{k+1},Y_{k+1}\right)$
with base case $\phi_{1}:=\phi\left(X_{k},Y_{k}\right)$.
\end{defn}

\begin{rem}
\label{rem:rec-sat}Observe that $\phi_{n}\left(X_{k},Y_{k}\right)$
actually says that exists a BL$\left[k\right]$ function between sequences
of length $n+k$, where the $k$ initial positions in the sequences
are left as free variables.
\end{rem}

Note that $\phi_{n}$ has a form of monotonicity wrt $n$: if exists
a TC function between sequences of lentgh $n+1$, and the function
satisfies $\phi$, then clearly exist such a function for sequences
of length $n$.

Clearly, if $\forall x_{0}\exists y_{0}\dots\forall x_{k}\exists y_{k}.\phi_{n}\left(X_{k},Y_{k}\right)$
for all $n$, then $\phi$ defines a nonempty set of functions in
the spirit of corollary \corref{formula-to-blf}. The crux of our
construction is the observation that if the underlying logic is weakly
$\omega$-categorical, then there are only finitely many $\phi_{n}$'s
up to logical equivalence, hence decidability and decision procedure
are immediate.

\subsection{Guarded Successor\index{Guarded Successor@\textbf{Guarded Successor}}}

Observe that a formula of the form $\phi\left(X_{k},Y_{k}\right)$
can be given a direct BL$\left[k\right]$ semantics also by adding
a sort of natural numbers with the successor relation $s$, and function
symbols $f:\mathbb{N}\rightarrow\mathcal{D}$ and $F:\left(\mathbb{N}\rightarrow\mathcal{D}\right)\rightarrow\left(\mathbb{N}\rightarrow\mathcal{D}\right)$,
where $F$ is required to be prefix-preserving, and writing $\phi$
as
\[
\forall t_{0},\dots t_{k}.\left[\bigwedge_{i=0}^{k-1}s\left(t_{i},t_{i+1}\right)\right]\rightarrow\phi\left(f\left(t_{0}\right),\dots,f\left(t_{k}\right),F\left(f\right)\left(t_{0}\right),\dots,F\left(f\right)\left(t_{k}\right)\right)
\]

\begin{defn}
Fix a logic ${\cal L}$ and let $\mathcal{D}$ be the sort it operates
over. First extend it with function symbols $f_{i}:\mathbb{N}\rightarrow\mathcal{D}$
and $F_{j}:\left(\mathbb{N}\rightarrow\mathcal{D}\right)\rightarrow\left(\mathbb{N}\rightarrow\mathcal{D}\right)$,
where $F$ is required to be prefix-preserving. If $\psi$ is any
formula in this extended language, then
\end{defn}

\[
\phi:=\psi|\phi\wedge\phi|\neg\phi|\forall t_{1},\dots,t_{m}.\left[\bigwedge_{\left(i,j\right)\in I}s\left(t_{i},t_{j}\right)\right]\rightarrow\phi
\]
defines a second extension to the language which we shall refer to
as the \emph{guarded successor extention of ${\cal L}$}. The sublanguage
of the form
\[
\phi:=\psi|\phi\wedge\phi|\neg\phi|\forall t_{1},\dots,t_{m}.\left[\bigwedge_{\left(i,j\right)\in I}s\left(t_{i},t_{j}\right)\right]\rightarrow\psi
\]
will be called the \emph{collapsed} \emph{fragment}. Its sublanguage
of the form
\[
\phi:=\bigvee_{k}\left(\forall t_{1},\dots,t_{m}.\left[\bigwedge_{\left(i,j\right)\in I_{k}}s\left(t_{i},t_{j}\right)\right]\rightarrow\psi_{k}^{1}\right)\wedge\left(\exists t_{1},\dots,t_{m}.\left[\bigwedge_{\left(i,j\right)\in J_{k}}s\left(t_{i},t_{j}\right)\right]\wedge\psi_{k}^{2}\right)
\]
will be called the \emph{normalized fragment}. In all cases, the guard
$\bigwedge_{\left(i,j\right)\in I}s\left(t_{i},t_{j}\right)$ is required
to uniquely determines the relative position between each $t_{i},t_{j}$,
and $\psi,\psi_{k}^{1},\psi_{k}^{2}$ involve $t_{1},\dots,t_{m}$
only through application of $f,F$ (or several such functions), while
$f,F$ may also be applied to constants from $\mathbb{N}$.
\begin{rem}
Applying $f,F$ to constants from $\mathbb{N}$ corresponds to the
above initial conditions.
\end{rem}

\begin{thm}
\label{thm:gs-norm}Any formula in a guarded successor extension can
be written as an equisatisfiable formula in the normalized fragment.
\end{thm}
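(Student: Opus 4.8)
The plan is to reach the normalized fragment by two reductions: an equivalence-preserving rewriting that removes all nesting of guarded quantifiers but may leave guards that are conjunctions of several disjoint successor-chains, followed by a global equisatisfiability step that splits each such guard into finitely many single-chain guards and then reshapes everything into the required disjunctive form.

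First I would put the formula in negation normal form, using $\neg(\forall\bar t.\,G\to\chi)\equiv\exists\bar t.\,G\wedge\neg\chi$ and de Morgan, so that the only connectives are $\wedge,\vee$ and the only quantifiers are guarded universals $\forall\bar t.\,G\to(\cdot)$ and guarded existentials $\exists\bar t.\,G\wedge(\cdot)$, each guard constraining only its own bound variables. Then I would remove nesting by induction on the number of guarded quantifiers lying above a leaf, using three moves that are all logically valid over infinite-time TC structures: (i) distributing a universal over a conjunction in its body, and an existential over a disjunction, $\forall\bar t.\,G\to(\chi_1\wedge\chi_2)\equiv(\forall\bar t.\,G\to\chi_1)\wedge(\forall\bar t.\,G\to\chi_2)$ and dually; (ii) dropping a guarded quantifier whose body does not mention its variables, since over an infinite time line every guard is satisfiable, so $\forall\bar t.\,G\to\chi\equiv\chi$ and $\exists\bar t.\,G\wedge\chi\equiv\chi$ there; (iii) prenexing two adjacent guarded blocks of the same kind into one, after $\alpha$-renaming their time tuples apart, so $\forall\bar t.\,G_1\to\forall\bar s.\,G_2\to\chi\equiv\forall\bar t\,\bar s.\,(G_1\wedge G_2)\to\chi$ and similarly for $\exists$. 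Iterating bottoms out in a Boolean combination of $\psi$'s and single-block guarded formulas $\forall\bar u.\,G\to\psi$ and $\exists\bar u.\,G\wedge\psi$, where $G$ may now be a conjunction of disjoint chains on the coordinates of $\bar u$.

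Second I would work at the level of the whole formula. The global equisatisfiability step turns each disconnected guard into a legal one: a guard that is a conjunction of several chains leaves the relative position of those chains unconstrained, but only finitely many configurations have the chains overlapping or abutting — each of those is a single chain through $\bar u$ — and the remaining, unboundedly-many ``far apart'' configurations are absorbed by a model-surgery argument, taking any model to one in which the local window-data read by $f$ and the prefix-preserving $F$ recur past a bounded point, so that only boundedly many offsets between the chains need be kept and the rest are subsumed; the extra anchors this needs are fresh $\mathbb N$-constants $0,1,2,\dots$ and the extra witnesses are fresh prefix-preserving symbols, matching the initial-condition and skolemization remarks around \corref{bl-type}. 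After this, every guarded block carries a single chain, i.e.\ we are in the collapsed fragment. Finally I would put the formula in disjunctive normal form over the atoms ``guarded block'' and ``$\psi$''; in each disjunct I would Skolemize the guarded-existential conjuncts (no guarded quantifier sits above them, so fresh $\mathbb N$-constants suffice) into $\psi$'s, merge all guarded-universal conjuncts into one by padding every guard to a common chain length $L$ — valid over infinite time because every short window is the prefix of a length-$L$ window and conversely — fold the remaining $\psi$'s into that universal's body (again using that its guard is satisfiable over infinite time), and pad with a trivially true guarded existential, landing in exactly the shape $\bigvee_k\big(\forall\dots\to\psi^1_k\big)\wedge\big(\exists\dots\wedge\psi^2_k\big)$.

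I expect the main obstacle to be the model-surgery underlying the disconnected-guard step: pinning down the exact ``recurrence bound'' beyond which the window-data may be forced to repeat, and verifying that the surgery can be carried out simultaneously for all the disconnected guards occurring in the formula without disturbing the parts already in shape — this is precisely where the theorem genuinely needs equisatisfiability rather than logical equivalence. Every other step is either a routine logical equivalence or elementary bookkeeping resting on the two standing facts that over an infinite time line every guard is satisfiable and every finite window extends to a longer one.
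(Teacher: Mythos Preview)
Your outline diverges from the paper at the one place that matters, and the replacement you offer is where the real gap sits. The paper's equisatisfiability step is \emph{not} model surgery or Skolemization by $\mathbb N$-constants: it is the introduction of fresh \emph{output streams} (Boolean flags $e$). For an existential literal $\exists T.\gamma(T)\wedge\psi$ the paper adds a new output $e$ together with a purely universal clause saying that $e$ starts at $1$ and drops to $0$ exactly once $\psi$ has been witnessed and stays $0$ thereafter; the existential then collapses to the single atom $\exists t.\,e(t)=0$. Several existentials are merged by taking the union of their flags, again exploiting the ``once zero, always zero'' monotonicity. This is what buys equisatisfiability while staying strictly inside the normalized fragment (new output streams are allowed; new uninterpreted $\mathbb N$-constant \emph{symbols} are not).

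Your alternative has two concrete problems. First, ``Skolemize the guarded existentials by fresh $\mathbb N$-constants $0,1,2,\dots$'' does not work: the witnessing time points depend on the model, so you cannot replace $\exists t$ by any \emph{fixed} numeral and preserve satisfiability, and the language does not provide uninterpreted $\mathbb N$-constant symbols to Skolemize into. The paper's flag trick is precisely the device that avoids this, by pushing the unknown witness into the data (an output stream) rather than into the time sort. Second, your disconnected-guard step is, as you note, the crux and is left as a ``model-surgery'' promise: you need a uniform recurrence bound valid simultaneously for all disconnected guards in the formula, and you have not tied this to anything proved in the paper (the $\phi_\infty$ fixed point comes \emph{after} normalization, not before). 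The paper sidesteps this entirely; its reduction to the collapsed fragment relies on the functionality of successor to flip inner quantifiers to match outer ones (so $\forall n\exists k.\,s(n,k)\wedge\cdots$ becomes $\forall n\forall k.\,s(n,k)\to\cdots$), never producing a disconnected guard in the first place. In short: the missing idea is the flag encoding of existentials via new output streams; once you have it, neither Skolemization by numerals nor the model-surgery step is needed.
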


\begin{rem}
Note that here we had to use the assumption that we are dealing with
infinite-time structures, namely $N=\infty$. In the finite-time case
we will also need the end-of-sequence predicate $\sharp$, resulting
with a slightly more complicated quantifier collapse. We omit this
simple derivation here for sake of brevity.
\end{rem}

\begin{cor}
\label{cor:ex-only}Any formula in a guarded successor extension without
temporal existential quantifiers can be written in a free-variable
BL$\left[k\right]$ form $\phi\left(X_{k},Y_{k}\right)$.
\end{cor}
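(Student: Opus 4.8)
The plan is to show that the hypothesis forces the formula into a conjunctive skeleton of universally guarded blocks, each of which is literally a sliding-window assertion. First I would use that the guarded successor syntax builds formulas from subformulas $\psi$ (which mention the time variables only through $f,F$ and carry no temporal quantifier) using only $\wedge$, $\neg$, and the universally guarded block $\forall t_1,\dots,t_m.\big[\bigwedge_{(i,j)\in I}s(t_i,t_j)\big]\rightarrow(\cdot)$, so that the only way a temporal existential quantifier can appear is as $\neg\big(\forall\vec t.[g]\rightarrow(\cdot)\big)$. In particular a disjunction between two universally guarded blocks is unavailable, since $\alpha\vee\beta$ abbreviates $\neg(\neg\alpha\wedge\neg\beta)$ and $\neg\alpha$ with $\alpha$ a universally guarded block is exactly a forbidden temporal existential. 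Hence every universally guarded block sits in a purely conjunctive position, and after pushing the (finitely many, temporal-free) negations down to the leaves, $\phi$ becomes a conjunction of: (i)~temporal-quantifier-free subformulas in which $f,F$ are applied only to concrete elements of $\mathbb{N}$---the initial conditions---or to nothing temporal at all---global side conditions that are simply true or false; and (ii)~finitely many universally guarded blocks $\forall t_1,\dots,t_m.\big[\bigwedge_{(i,j)\in I_l}s(t_i,t_j)\big]\rightarrow\psi_l$. Nested guarded blocks are flattened into this shape using that a guard fixes the relative positions of all its variables, so nested blocks whose variables are linked merge into one window; alternatively one may invoke \thmref{gs-norm} to land in the collapsed fragment directly and observe that, with no temporal existentials around, its top-level disjunction degenerates to a single conjunctive disjunct.

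Next I would read off the lookback from a block. Since the guard $\bigwedge_{(i,j)\in I_l}s(t_i,t_j)$ fixes the relative positions of $t_1,\dots,t_m$, after renaming they form a window $t,t+1,\dots,t+k_l$ of fixed width $k_l$ anchored at an arbitrary base point $t$; write $n:=t+k_l$ for its right endpoint. Every occurrence of $f(t_i)$ in $\psi_l$ is then one of $x_{n-k_l},\dots,x_n$, and every occurrence of $F(f)(t_i)$ is one of $y_{n-k_l},\dots,y_n$; renaming accordingly turns $\psi_l$ into a formula $\chi_l(X_{k_l},Y_{k_l})$ of the underlying logic, whose own---non-temporal---quantifiers are left untouched, and the block then asserts exactly that for every time point $n$ the length-$(k_l+1)$ windows $X_{k_l},Y_{k_l}$ ending at $n$ satisfy $\chi_l$. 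By the introductory reading of $s\models\phi(x,y)$, together with \corref{formula-to-blf} and \remref{rec-sat}, this is precisely the semantics of the free-variable BL$[k_l]$ form $\chi_l(X_{k_l},Y_{k_l})$, with the finitely many $f,F$-on-constants conjuncts of~(i) playing the role of its implicitly carried initial conditions (cf.\ the remark following \corref{formula-to-blf}).

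Finally I would assemble the answer: with $k:=\max_l k_l$, pad every $\chi_l$ up to width $k$ and anchor all blocks at a common endpoint $n$, and set $\phi(X_k,Y_k):=\bigwedge_l\chi_l(X_k,Y_k)$. Since the implicit universal quantifier over the time point commutes with $\wedge$, this single formula with $2k+2$ free variables defines the same set of BL$[k]$ functions as $\phi$, the short-range constraints dropped by the padding---finitely many windows near the beginning of the streams, together with the global side conditions---being absorbed into the initial conditions.

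The step I expect to be the main obstacle is the first one: making rigorous that ``no temporal existential quantifier'' genuinely excludes an \emph{essential} disjunction of universally guarded blocks. Such a disjunction is in general not BL$[k]$ for any $k$---for instance, requiring the output stream to be either monotone nonincreasing or else monotone nondecreasing is a disjunction of two universally guarded blocks that no sliding window can express---so the corollary would fail without this restriction, and one must argue from the shape of the guarded successor syntax itself rather than merely from the normalized fragment of \thmref{gs-norm}, where a trivial existential guard can always be appended to each disjunct. Everything afterwards---the flattening of nested guards, the reindexing of windows to a common width and anchor, and the extraction of the finitely many initial-segment constraints as initial conditions---is routine bookkeeping.
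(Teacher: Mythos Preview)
Your argument is correct and, in fact, considerably more careful than the paper, which states this corollary with no proof and evidently regards it as immediate from \thmref{gs-norm}. The key point you isolate---that in the grammar $\phi:=\psi\mid\phi\wedge\phi\mid\neg\phi\mid\forall\vec t.[g]\rightarrow\phi$ the only way to produce a temporal existential is a $\neg$ sitting over a universally guarded block, so the hypothesis forbids any subformula containing such a block from occurring under a negation---is exactly what is needed to make the corollary true, and it is not spelled out anywhere in the paper. Your monotone example shows precisely why this reading is forced: a bare disjunction of two universally guarded blocks has no temporal existential in the semantic (NNF) sense, yet is not BL$[k]$ for any $k$; hence one must read ``without temporal existential quantifiers'' syntactically, and then your structural argument (universal blocks sit only in conjunctive position, negations push down to temporal-free leaves, nested guards flatten via the collapsed fragment, windows pad to a common width) goes through.

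The only place where you lean on the paper rather than on your own argument is the flattening of nested guarded blocks whose inner and outer time variables are not linked by the successor chain; there you defer to \thmref{gs-norm} to land in the collapsed fragment, which is fine and is exactly what the paper intends. Everything else---the conjunctive skeleton, the reindexing to a common anchor $n$ and width $k=\max_l k_l$, and the absorption of $f,F$-on-constants conjuncts into initial conditions---is the natural elaboration of what the paper leaves implicit.
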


We of course bear in mind that if some language is decidable and is
weakly $\omega$-categorical, then its extension with recurrence relations
is also decidable. Together with a method to handle the existential
part as described in the next section, we'll conclude that:
\begin{cor}
Satisfiability of a formula in a guarded successor extension is decidable
if this fragment is obtained from a decidable language $\mathcal{L}$
which is weakly $\omega$-categorical, enhanced with the sort $\mathbb{N}$,
guarded successors, $\mathbb{N}\rightarrow\mathcal{D}$ function symbols,
and BL$\left[k\right]$ function symbols.
\end{cor}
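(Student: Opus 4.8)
The plan is to assemble the decision procedure from the ingredients established above. The hypothesis places our formula inside a guarded successor extension of a decidable, weakly $\omega$-categorical language $\mathcal{L}$ --- the sort $\mathbb{N}$, the guarded successors, the $\mathbb{N}\rightarrow\mathcal{D}$ function symbols and the BL$[k]$ function symbols are exactly what the guarded successor extension adjoins --- so \thmref{gs-norm} applies and replaces it by an equisatisfiable formula in the normalized fragment, that is, a finite disjunction $\bigvee_{k}\chi_{k}$ in which each $\chi_{k}$ is a conjunction $\chi_{k}^{\forall}\wedge\chi_{k}^{\exists}$ of a guarded \emph{universal} sentence and a guarded \emph{existential} sentence. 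A disjunction is satisfiable iff some disjunct is, and each $\chi_{k}$ uses only finitely many symbols of the signature, so it suffices to decide satisfiability of a single $\chi=\chi^{\forall}\wedge\chi^{\exists}$.

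Next I would treat the universal part $\chi^{\forall}$. It contains no temporal existential quantifiers, so by \corref{ex-only} it is equivalent to a free-variable formula $\phi\left(X_{k},Y_{k}\right)$ of bounded lookback $k$ over the extension of $\mathcal{L}$ obtained by adjoining the finitely many fresh free variables $x_{n-k},\dots,x_{n},y_{n-k},\dots,y_{n}$; this extension is still decidable and still weakly $\omega$-categorical. I would then form the recurrence relation $\phi_{n}$ associated with $\phi$ as in the definition preceding \remref{rec-sat}. By that remark, $\phi_{n}\left(X_{k},Y_{k}\right)$ expresses the existence of a BL$[k]$ function on sequences of length $n+k$ with the $k$ initial coordinates left free, so $\chi^{\forall}$ is satisfiable over an infinite-time TC structure iff $\forall x_{0}\exists y_{0}\dots\forall x_{k}\exists y_{k}.\phi_{n}\left(X_{k},Y_{k}\right)$ holds for every $n$. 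The sequence $\left(\phi_{n}\right)_{n}$ is monotone in $n$, and weak $\omega$-categoricity of $\mathcal{L}$ forces only finitely many of the $\phi_{n}$ to occur up to logical equivalence, so the sequence effectively reaches a fixed point; using decidability of $\mathcal{L}$ together with the fact (noted above) that the recurrence-relation extension of a decidable weakly $\omega$-categorical language is again decidable, one checks the displayed quantified condition for all relevant $n$. This decides satisfiability of $\chi^{\forall}$ and, when it is satisfiable, returns the recurrence point $m$ together with the finite family of admissible initial segments.

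It then remains to fold in the existential part $\chi^{\exists}$. A guarded existential sentence asserts that a fixed finite pattern over the values of the $f_{i}$ and $F_{j}$ occurs within some bounded time window; pulled back through the recurrence, this becomes a constraint on finitely many coordinates of a satisfying run of $\chi^{\forall}$ --- those inside the recurrence point, together with one suitably placed occurrence of the pattern. By the method for the existential part developed in the next section, deciding whether such a pattern can be placed consistently with the TC functions that satisfy $\chi^{\forall}$ again reduces to finitely many satisfiability queries in $\mathcal{L}$; conjoining this test with the one for $\chi^{\forall}$ decides satisfiability of $\chi$, hence of the original formula, which is the claim.

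The hard part will be verifying that reaching a fixed point of $\left(\phi_{n}\right)_{n}$ genuinely certifies the existence of an \emph{infinite} TC function --- a compactness (König's-lemma) argument over the finitely many logical-equivalence classes of the $\phi_{n}$ --- and that the one-shot requirements of $\chi^{\exists}$ can be met \emph{simultaneously} with, rather than merely separately from, the eventually periodic behavior that weak $\omega$-categoricity forces on $\chi^{\forall}$. Everything else is bookkeeping already set up in the preceding sections.
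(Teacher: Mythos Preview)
Your proof sketch is correct and follows the same route the paper takes: reduce to the normalized fragment via \thmref{gs-norm}, rewrite the universal part in BL$[k]$ free-variable form via \corref{ex-only}, decide it through the recurrence $\phi_{n}$ whose fixed point exists by weak $\omega$-categoricity (Section~\ref{sec:Recurrence-Relations}), and treat the existential part with the machinery of the next section (\thmref{gs-main} and the remarks after it). The paper states the corollary with only a one-line justification pointing to exactly these ingredients, so your write-up is in fact more explicit than the paper's own argument while matching it step for step.
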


We refer to this extended language as GSSOTC$\left[\mathcal{L}\right]$,
where GSSOTC\index{GSSOTC} stands for Guarded-Successor Second-Order
Time-Compatible. The second-order part is due to the following: given
two sequences $f,g:\mathbb{N}\rightarrow\mathcal{D}$, we can declare
a non-standard quantifer alternation $\forall f\exists g$, which
would translate into $\exists F\forall f$ (so far just standard higher-order
skolemization), where $F$ is a TC function between sequences. Those
function quantifiers are eliminated when converting the formula to
the free-variable form, which is then converted to function-free recurrence-relation
form.

Some easy extensions of this language were described above, we reiterate
them and add more: the end-of-string predicate $\sharp$, having multiple
successor relations and by that considering trees rather sequences,
having constant positions, so instead of e.g. $\phi\left(x_{n},x_{n-1},y_{n}\right)$,
we have e.g. $\phi\left(x_{1},x_{2},x_{n},x_{n-1},y_{n}\right)$,~having
explicit second-order quantifiers that are eliminated by reduction
to recurrence relations, and finally, having richer quantifier alternation,
e.g. for all keyboard input at time $n$, exists a memory state at
time $n$, s.t. for all network input at time $n$, and so on, resulting
in quantification of the form $\forall x_{1}\exists y_{1}\forall z_{1}\forall x_{2}\exists y_{2}\forall z_{2}\forall x_{3}\exists y_{3}\forall z_{3}\dots$.

\subsection{Decision Methods and Execution}

In the spirit of remark \remref{tuples}, we shall have several input
and output sequences, each referred to as a \emph{stream}.
\begin{thm}
\label{thm:gs-main}Given $\phi\left(X_{j}^{i},Y_{j}^{i}\right)$
where $X$ are inputs and $Y$ are outputs, and $i$ denoting the
stream number, define the recurrence relation
\[
\phi_{0}\left(X_{k}^{i},Y_{k}^{i}\right):=\phi\left(X_{k}^{i},Y_{k}^{i}\right)
\]
\[
\phi_{n}\left(X_{k}^{i},Y_{k}^{i}\right):=\phi\left(X_{k}^{i},Y_{k}^{i}\right)\wedge\forall x_{k+1}\exists y_{k+1}.\phi_{n-1}\left(X_{k+1}^{i},Y_{k+1}^{i}\right)
\]
so $\phi_{n}$ means that exists a model with time points $0,\dots,n+k$
starting with $X_{k}^{i},Y_{k}^{i}$. Then the reccurence relation
is monotonic, namely $\forall n\forall X_{k}^{i}Y_{k}^{i}.\phi_{n+1}\left(X_{k}^{i},Y_{k}^{i}\right)\rightarrow\phi_{n}\left(X_{k}^{i},Y_{k}^{i}\right)$
and therefore has a fixed point. Denote it by $\phi_{\infty}\left(X_{k}^{i},Y_{k}^{i}\right)$.
Given a model of $\phi$ with $m$ time points, and given each input
$X^{i}$ at point $m+1$, then an output $Y^{i}$ will have an unbounded
continuation satisfying $\phi$ iff $\phi_{\infty}\left(X_{m+1}^{i},Y_{m+1}^{i}\right)$.
\end{thm}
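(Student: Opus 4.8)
The plan is to verify the three assertions in order: monotonicity of the recurrence, existence of a fixed point, and the characterization of unbounded continuations via $\phi_\infty$. For monotonicity, I would argue by induction on $n$. The base step compares $\phi_1$ and $\phi_0$: both equal $\phi$ at the outer conjunct, and $\phi_1$ carries the extra conjunct $\forall x_{k+1}\exists y_{k+1}.\phi_0$, so $\phi_1 \to \phi_0$ is immediate. For the inductive step, assume $\phi_n \to \phi_{n-1}$ holds as a valid implication (for all argument tuples); then $\phi_{n+1} = \phi \wedge \forall x_{k+1}\exists y_{k+1}.\phi_n$ implies $\phi \wedge \forall x_{k+1}\exists y_{k+1}.\phi_{n-1} = \phi_n$, since under any witness $y_{k+1}$ for the existential, $\phi_n(X_{k+1}^i,Y_{k+1}^i) \to \phi_{n-1}(X_{k+1}^i,Y_{k+1}^i)$ by the induction hypothesis, and the universal/existential quantifier block preserves this implication. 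This is the conceptually routine part; I would phrase it cleanly using the fact that $\forall\exists$ is monotone in its matrix.

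For the fixed point, I would invoke weak $\omega$-categoricity of the underlying theory of atomless BA (established earlier in the excerpt): the formulas $\phi_n(X_k^i,Y_k^i)$ all use a fixed finite set of free variables and a fixed finite set of constants (those appearing in $\phi$ — note the quantified $x_{k+1},y_{k+1}$ do not enlarge the constant set), so up to logical equivalence there are only finitely many of them. Combined with the monotonicity just proved, the sequence $\phi_0 \geq \phi_1 \geq \phi_2 \geq \cdots$ is a descending chain in a finite lattice, hence eventually constant; call the stabilized formula $\phi_\infty$. Since it stabilizes, $\phi_\infty \equiv \phi \wedge \forall x_{k+1}\exists y_{k+1}.\phi_\infty(X_{k+1}^i,Y_{k+1}^i)$, so it is a genuine fixed point of the recurrence operator.

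The main substantive step is the final characterization. I would unwind remark \remref{rec-sat}: $\phi_n(X_k^i,Y_k^i)$ holds exactly when there is a BL$[k]$-consistent extension of the data from the $k$ initial free positions out to length $n+k$, i.e. a way to continue for $n$ further steps satisfying $\phi$ at every window. The forward direction is that $\phi_n$ for all $n$ gives, by König's lemma / a compactness argument on the finitely-branching tree of finite consistent extensions (or directly by the fixed-point identity applied repeatedly), an infinite continuation — here the TC/prefix-preserving structure and the ability to push quantifiers inside (noted after \corref{formula-to-blf}) is what legitimizes assembling the local witnesses into one global function. For the equivalence with $\phi_\infty$ specifically: since the chain stabilizes at some stage $N$, $\phi_\infty \equiv \phi_N \equiv \phi_{N'}$ for all $N' \geq N$, so $\phi_\infty(X_{m+1}^i,Y_{m+1}^i)$ holds iff $\phi_{N'}(X_{m+1}^i,Y_{m+1}^i)$ holds for all $N'$, iff arbitrarily long consistent continuations exist from the point $m+1$ data, iff (by the compactness step) an unbounded one exists. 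Splicing this onto the given length-$m$ model through the position-$m+1$ input $X^i$ and output $Y^i$ yields the claimed iff.

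The hard part will be making the compactness/König argument fully rigorous in the presence of second-order (TC function) quantifiers and the stream index $i$: one must check that "arbitrarily long continuations of the finite data" really does yield a single infinite TC function, rather than merely an infinite sequence of unrelated finite approximations, and that the prefix-preservation constraint is respected in the limit. I expect this to go through precisely because BL$[k]$ functions are finitely presented (\corref{bl-type}) and the recurrence was built to track exactly the relevant finite information, but it is the step that deserves the most care; everything else is bookkeeping with the monotonicity and the finiteness of the Lindenbaum–Tarski quotient.
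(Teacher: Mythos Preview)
Your treatment of monotonicity and of the existence of the fixed point is correct and in fact more explicit than the paper's, which essentially takes both as evident from the intended semantics (a model of length $n+1$ restricts to one of length $n$; weak $\omega$-categoricity bounds the descending chain). The paper spends its effort instead on unwinding the quantifier prefix to confirm that $\phi_n$ really has the meaning ``a model with $n+k$ time points exists starting from the given window,'' which you absorb into remark~\remref{rec-sat}.

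For the final characterization, however, your emphasis is misplaced. You propose K\"onig's lemma on a ``finitely-branching tree of finite consistent extensions,'' but the domain is an atomless Boolean algebra, hence infinite; the tree of one-step continuations is \emph{not} finitely branching, so K\"onig does not apply, and a raw compactness argument would need exactly the care you flag as the hard part. The paper sidesteps all of this by taking the route you mention only parenthetically: iterate the fixed-point identity
\[
\phi_\infty \;\equiv\; \phi \wedge \forall x_{k+1}\exists y_{k+1}.\,\phi_\infty\!\left(X_{k+1}^{i},Y_{k+1}^{i}\right).
\]
If the current window satisfies $\phi_\infty$, then for any next input there is a next output making the shifted window satisfy $\phi_\infty$ again; repeating step by step yields the unbounded continuation directly, and the construction is automatically time-compatible because each output is chosen after its corresponding input arrives. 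No limiting, gluing, or branching argument is required. So the ``hard part'' you anticipate dissolves once you promote the fixed-point iteration from a parenthetical alternative to the main line of the proof.
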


\begin{rem}
The above formulation suggests that $\phi_{\infty}$ is a normal form
of $\phi$ when understood as defining TC models.
\end{rem}

\begin{rem}
A TC structure is a model of $\phi$ iff any subsequence satisfies
$\phi_{\infty}$ when understood as a formula in the language of BA.
\end{rem}

\begin{rem}
Given inputs at each point of time, satsifying outputs can be computed
by substituting the known variables into $\phi_{\infty}$, and solving
for the missing outputs. This is an execution method for software
specification in this language. Software specification in this language
is therefore directly executable as-is, using an oracle to determine
satisfying assignments to formulas in the language of atomless BA.
Finding satisfying assignments to a formula in the language of atomless
BA is a topic by its own, and is omitted here for sake of brevity.
\end{rem}

\begin{cor}
Given two formula $\phi\left(X_{j}^{i},Y_{j}^{i}\right),\psi\left(X_{j}^{i},Y_{j}^{i}\right)$,
then the set of TC models of $\phi$ is a subset of those of $\psi$,
iff $\forall x_{0}y_{0}\dots x_{k}y_{k}.\phi_{\infty}\left(X_{k}^{i},Y_{k}^{i}\right)\rightarrow\psi_{\infty}\left(X_{k}^{i},Y_{k}^{i}\right)$.
\end{cor}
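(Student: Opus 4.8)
The plan is to reduce the statement to the subsequence characterization of TC models recorded in the two remarks following Theorem~\thmref{gs-main}: after aligning the lookback parameters of $\phi$ and $\psi$ by padding both to a common $k$, a TC structure is a model of $\phi$ exactly when every window (length-$(k+1)$ subsequence) of every run of that structure satisfies $\phi_\infty$, read as a formula in the language of atomless BA in the $2k+2$ window variables. Call this fact $(\star)$; it is precisely the content of Theorem~\thmref{gs-main} together with the monotonicity/fixed-point argument that produced $\phi_\infty$. Granting $(\star)$ for both $\phi$ and $\psi$, the corollary becomes a statement purely about which window valuations satisfy $\phi_\infty$ versus $\psi_\infty$.

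For the direction ($\Leftarrow$): assume the BA-sentence $\forall x_0 y_0 \dots x_k y_k.\,\phi_\infty \rightarrow \psi_\infty$ holds, and let $M$ be any TC model of $\phi$. By $(\star)$ every window of every run of $M$ satisfies $\phi_\infty$, hence by the assumed implication every such window satisfies $\psi_\infty$, hence by $(\star)$ again $M$ is a TC model of $\psi$. So the set of TC models of $\phi$ is contained in that of $\psi$; this direction is routine once $(\star)$ is in hand.

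For the direction ($\Rightarrow$) I would argue by contraposition. Suppose the implication fails, i.e. there is a window valuation $w=(a_0,\dots,a_k,b_0,\dots,b_k)$ in the fixed atomless BA with $w\models\phi_\infty$ and $w\not\models\psi_\infty$ (if $\phi_\infty$ is unsatisfiable there is nothing to prove, since then $\phi$ has no TC model and the implication is vacuously valid). The key step is to manufacture a TC model $M$ of $\phi$ in which $w$ actually occurs as a window of some run. For this I would unfold the fixed-point equation $\phi_\infty(X_k,Y_k)\equiv\phi(X_k,Y_k)\wedge\forall x_{k+1}\exists y_{k+1}.\,\phi_\infty(X_{k+1},Y_{k+1})$: starting from $w$, for every continuation of the input stream there is a choice of output keeping the shifted window inside $\phi_\infty$, and the explicit witnesses provided by our quantifier-elimination analysis (Theorems~\thmref{qelim-main} and~\thmref{sol-interval}) let this choice be made prefix-preservingly, yielding a genuine TC (indeed BL$[k]$) function on the input histories extending the prefix of $w$. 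Then $M\models\phi$ by $(\star)$, so by hypothesis $M\models\psi$, so by $(\star)$ its window $w$ satisfies $\psi_\infty$ — a contradiction, completing the contrapositive.

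The main obstacle I expect is precisely this last construction: reconciling a prescribed finite window $w$ with the requirement that a bona fide TC model must produce a valid output for \emph{every} input history, not merely for histories extending the prefix of $w$, while the initial positions $x_0,\dots,x_{k-1}$ are left free. I would dispose of the histories not extending $w$'s prefix by the same witness construction applied to whatever initial window they start from — legitimate because $\phi_\infty$, being a fixed point, is forward-extendable over all inputs, so the only genuine constraint is consistency at the chosen initial window, which we are free to pick — but making sure the resulting \emph{global} function is still prefix-preserving and BL$[k]$ is the delicate point and should be spelled out carefully rather than waved through.
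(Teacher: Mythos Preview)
Your approach is exactly the paper's: the corollary is stated without proof, as an immediate consequence of the remark that a TC structure models $\phi$ iff every length-$(k+1)$ subsequence satisfies $\phi_\infty$ --- precisely your $(\star)$. You have in fact been more scrupulous than the paper, which does not isolate the $(\Rightarrow)$ direction or the need to realise a given $\phi_\infty$-window inside a total TC function; your identification of that obstacle and the forward-extension argument via the fixed-point equation is the right way to fill what the paper leaves implicit.
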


This gives us an algorithm to decide whether $\phi\psi'=0$ where
$\phi,\psi$ are seen as sets of TC models.
\begin{rem}
Combined with \thmref{gs-norm} and its proof, this corollary gives
us a decision procedure for the full language GS. Each DNF clause
will have a single universal and a single existential (which is a
negated universal), so deciding emptiness for each clause comes down
to the last corollary.
\end{rem}

\begin{rem}
Since $\phi_{\infty}$ refers only to the universal parts, while the
existential parts may of course restrict the models, therefore we
should, at execution time, check at each point of time whether we
can satisfy the existential parts. If so, we satisfy them indeed,
just once. If the formula is satisfiable then such point in time is
guaranteed to exist. If there are multiple existential parts in a
DNF clause, then for execution, we have to squeeze them into one using
the flags as in the proof of \thmref{gs-norm}, since those existential
parts may depend on each other.
\end{rem}

\begin{rem}
When $\phi\left(X_{n},Y_{n}\right)$ is understood as a GS formula,
and $\phi$ is in the language of atomless BA intepreted in this very
BA of GS formulas (possibly with more algebras as the consrtuction
is closed under products), then NSO is a sublanguage of this language.
That'd be a software specification language where inputs and outputs
are nothing but sentences in this very language. This way we can support
the software update mechanism described in the introduction as a crucial
component for safe AI. Another way to look at it: a robot is programmed
in a language ${\cal L}$ and accepts commands form the user in the
very same language ${\cal L}$. Now its internal program has to ask
whether the command is consistent with, say, safety conditions. It
couldn't do so unless ${\cal L}$ is a temporal logic with inputs
in ${\cal L}$ equipped with the theory of BA.
\end{rem}

\subsection{Complexity}

Quantifier elimination in theories of BA where constants are either
0,1 were studied by Tarski by introducing his so-called invariants.
Kozen \cite{DBLP:journals/tcs/Kozen80} extended this notion of invariants
and by that derived a complexity\index{complexity} characterization
for the decision problem. For infinite BAs, it is complete for $\bigcup_{c}$STA$\left(*,c^{n},n\right)$.
Roughly, this means anything that can be done in exponential time
by an alternating Turing machine with linearly many alternations.
For the two-element BA, it is simply QBF which is maybe the most famous
PSPACE-complete problem. For GS, we saw that the number of formulas
with fixed number of free variables and constants, is triple-exponential
in the number of the free variables. This gives an upper bound for
GS over atomless BA.

\section{Conclusion}

We have presented new methods in the theory of atomless BA that extend
existing results from SBF to BF in general. We also presented the
concept of weakly $\omega$-categorical theories and how they relate
to decidable recurrence relation extensions. We used those ideas to
construct a language that can speak of its own sentences by abstracting
them to merely BA elements. We further extended this construction
to a novel temporal logic with several distinguishing abilities.

\bibliographystyle{plain}
\nocite{*}
\bibliography{time-gs}

\section*{Appendix: Proofs}
\begin{proof}[Proof of \thmref{bcc}]
We prove the first statement and the second is analogous. Further
we prove it only for the univariate case, and the multivariate case
follows immediately by induction. Any univariate BF can be written
in Boole's normal form (sometimes mistakingly called Shannon's normal
form) as $f\left(x\right)=ax\cup bx'$ where $a=f\left(1\right)$
and $b=f\left(0\right)$. Now $f\left(x\right)=0$ iff $ax=bx'=0$,
which reads $b\leq x\leq a'$, so a solution exists iff $b\leq a'$
equivalently $ab=0$.
\end{proof}
\begin{proof}[Proof of \thmref{sol-interval}]
The second equivalence follows immediately from the proof of the
previous theorem. For the first equivalence, write $f\left(x\right)=ax+b$
(this is the algebraic normal form). Then 
\[
f\left(x+f\left(x\right)\right)=a\left(x+ax+b\right)+b=ax+ax+ab+b=ab+b=f\left(0\right)f\left(1\right)=0
\]
and for the other direction, if $f\left(x\right)=0$, just put $t=x$.
\end{proof}
\begin{proof}[Proof of \thmref{main-ineq}]
First assume that $X^{A_{1}},\dots.X^{A_{m}}$ are all distinct and
therefore the nonzero $b$'s are all disjoint, otherwise convert any
two equations of the form $\begin{matrix}X^{A_{i}}\geq s\\
X^{A_{i}}\geq t
\end{matrix}$ into the equivalent form $X^{A_{i}}\geq s\cup t$. Necessity is now
immediate recalling that two different minterms are always disjoint
and that subsets of disjoint sets must also be disjoint. For sufficiency
and $n=1$ the equations take the form $x\geq b_{1}$ and $x'\geq b_{2}$
which indeed has a solution iff $b_{1}b_{2}=0$. Assume for $n$ and
consider a distinguished variable $x$. Then we can split the equations
into $p+q=m$ equations and rewrite them as $\begin{matrix}\bigwedge_{i=1}^{p}xX^{A_{i}}\geq b_{i}\\
\bigwedge_{j=1}^{q}x'X^{B_{j}}\geq c_{j}
\end{matrix}$ and let $X$ be a solution of $\begin{matrix}\bigwedge_{i=1}^{p}X^{A_{i}}\geq b_{i}\\
\bigwedge_{j=1}^{q}X^{B_{j}}\geq c_{j}
\end{matrix}$ using the induction hypothesis after making sure that all $A_{i},B_{i}$
are distinct (while if $p+q=1$ then a solution trivially exists).
If $p\neq0$, set $x=\bigcup_{k}b_{k}$. Then $\bigcup_{k}c_{k}\leq x'$
due to the disjointness assumption. Therefore
\[
xX^{A_{i}}=\left(\bigcup_{k}b_{k}\right)\wedge X^{A_{i}}\geq b_{i}X^{A_{i}}=b_{i}
\]
\[
x'X^{B_{j}}\geq\left(\bigcup_{k}c_{k}\right)X^{B_{j}}\geq c_{j}X^{B_{j}}=c_{j}
\]
Similarly set $x=\bigcap_{k}c_{k}'$ if $p=0$, or simply $x=0$.
\end{proof}
\begin{proof}[Proof of \lemref{qelim-main}]
If $f$ has a zero, then all such zeros are precisely the range of
$x+f\left(x\right)$ by \thmref{sol-interval}. So we can write the
system as $f\left(x+f\left(x\right)\right)=0\wedge\bigwedge_{i}g_{i}\left(x+f\left(x\right)\right)\neq0$.
Now $f$ has a zero iff $f\left(0\right)f\left(1\right)=0$ by Boole's
consistency condition, in which case $f\left(x+f\left(x\right)\right)$
is identically zero.
\end{proof}
\begin{proof}[Proof of \lemref{egsbe-norm}]
First substitute the general solution $x+f\left(x\right)$ of the
positive part into the negative parts and obtain:
\[
f\left(x\right)=0\wedge\bigwedge_{i}g_{i}\left(x+f\left(x\right)\right)\neq0
\]
and since $f\left(x\right)=0$ there is no harm in multiplying the
negative part with $f'\left(x\right)$:
\[
f\left(x\right)=0\wedge\bigwedge_{i}f'\left(x\right)g_{i}\left(x+f\left(x\right)\right)\neq0
\]
now for any $h\left(x\right)$ we have $h\left(x\right)\neq0$ iff
$xh\left(1\right)\neq0\vee x'h\left(0\right)\neq0$ (by Boole's normal
form), so we can rewrite the negative part as:
\[
f\left(x\right)=0\wedge\bigwedge_{i}xf'\left(1\right)g_{i}\left(f'\left(1\right)\right)\neq0\vee x'f'\left(0\right)g_{i}\left(f\left(0\right)\right)\neq0
\]
and using proposition \propref{xfx} for the parts $f'\left(1\right)g_{i}\left(f'\left(1\right)\right)$
and $f'\left(0\right)g_{i}\left(f\left(0\right)\right)$ we obtain
the first result. Now simply account for the conditions of $f,g_{i}$
having zeros at all, and obtain the second result.
\end{proof}
\begin{proof}[Proof of corollary \corref{formula-to-blf}]
Assume $k=1$ for simplicity. Consider $\phi\left(x_{n-1},x_{n},y_{n-1},y_{n}\right)$.
We understand $\phi$ as defining a relation between inputs and outputs
at current time ($x_{n},y_{n}$ respectively) and in the previous
time $x_{n-1},y_{n-1}$. Intuitively, it defines at least one BL$\left[k\right]$
function if the infinitary expression $\forall x_{1}\exists y_{1}\forall x_{2}\exists y_{2}\dots.\bigwedge_{n=2}^{\infty}\phi\left(x_{n-1},x_{n},y_{n-1},y_{n}\right)$
is satisfiable, alternatively if it is true in a model of choice.
This infinitary expression can be given a concrete meaning by considering
the first order theory containing all formulas of the form $\forall x_{1}\exists y_{1}\dots\forall x_{N}\exists y_{N}.\bigwedge_{n=2}^{N}\phi\left(x_{n-1},x_{n},y_{n-1},y_{n}\right)$
for all $N$.
\end{proof}
\begin{proof}[Proof of \thmref{gs-norm}]
It is easy to see that we can always reduce into the collapsed fragment:
this is immediate from the uniqueness of successor, for example $\forall n\exists k.s\left(n,k\right)\wedge\dots$
is same as $\forall nk.s\left(n,k\right)\rightarrow\dots$. For the
normalized form, first convert the formula to DNF at its outermost
level, so each literal may be a complex quantified formula, then collapse
the quantifier alternation as above, so each quantified formula is
either universal or existential. Moving to NNF we can consider universal
and existential literals instead of positive and negative literals.
In each DNF clause we can collapse the universal parts into a single
one since universals distribute over conjunctions. Given an existential
literal $\exists T.\gamma\left(T\right)\wedge\phi$ while denoting
$T=t_{1},\dots,t_{k}$, we introduce a flag $e$ which is an additional
output variable, and write
\[
\left[\exists t.e\left(t\right)=0\right]\wedge\forall T.\gamma\left(T\right)\rightarrow\left[e\left(k-1\right)=1\wedge\left(e\left(t_{k}\right)=0\leftrightarrow\left(\psi\vee e\left(t_{k}-1\right)=0\right)\right)\right]
\]
where $t_{k}=\max\left\{ t_{1},\dots,t_{k}\right\} $ is assumed.
The existential part is therefore reduced into a single atom at the
expense of introducing a new output stream, and with introducing new
universal literals which can then be collapsed into a single one as
above. Given multiple single-atom existential parts $\bigwedge_{k}\exists t.e_{k}\left(t\right)=0$
we can easily see that they are equivalent to $\exists t.\left[\bigcup_{k}e_{k}\left(t\right)\right]=0$
because each flag remains zero once it becomes zero, so there is a
point in time where all flags are eventually zero, so the existential
part can be merely a single $\exists n.e\left(n\right)=0$ by defining
this additional flag in the universal part. By that we reduced both
the universal and the existential parts into a single one each.
\end{proof}
\begin{proof}[Proof of \thmref{gs-main}]
A model of size $n+1$ exists iff $\forall x_{0}\exists y_{0}\dots\forall x_{n}\exists y_{n}.\bigwedge_{m=k}^{n}\phi\left(X_{m}^{i},Y_{m}^{i}\right)$.
Leaving free the first $k+1$ time points we can write
\[
\phi_{n-k}\left(X_{k}^{i},Y_{k}^{i}\right):=\forall x_{k+1}\exists y_{k+1}\dots\forall x_{n}\exists y_{n}.\bigwedge_{m=k}^{n}\phi\left(X_{m}^{i},Y_{m}^{i}\right)
\]
\[
=\forall x_{k+1}\exists y_{k+1}\dots\forall x_{n}\exists y_{n}.\phi\left(X_{k}^{i},Y_{k}^{i}\right)\wedge\bigwedge_{m=k+1}^{n}\phi\left(X_{m}^{i},Y_{m}^{i}\right)
\]
\[
=\phi\left(X_{k}^{i},Y_{k}^{i}\right)\wedge\forall x_{k+1}\exists y_{k+1}\dots\forall x_{n}\exists y_{n}.\bigwedge_{m=k+1}^{n}\phi\left(X_{m}^{i},Y_{m}^{i}\right)
\]
\[
=\phi\left(X_{k}^{i},Y_{k}^{i}\right)\wedge\forall x_{k+1}\exists y_{k+1}.\phi_{n-k-1}\left(X_{k+1}^{i},Y_{k+1}^{i}\right)
\]
since replacing $k$ with $k+1$ in $\phi_{n-k}\left(X_{k}^{i},Y_{k}^{i}\right):=\forall x_{k+1}\exists y_{k+1}\dots\forall x_{n}\exists y_{n}.\bigwedge_{m=k}^{n}\phi\left(X_{m}^{i},Y_{m}^{i}\right)$
results with $\phi_{n-k-1}\left(X_{k+1}^{i},Y_{k+1}^{i}\right):=\forall x_{k+2}\exists y_{k+2}\dots\forall x_{n}\exists y_{n}.\bigwedge_{m=k+1}^{n}\phi\left(X_{m}^{i},Y_{m}^{i}\right)$.
In case that $\forall x_{0}\exists y_{0}\dots\forall x_{k}\exists y_{k}.\phi_{\infty}\left(X_{k}^{i},Y_{k}^{i}\right)$
then due to monotonicity, every $k+1$ subsequence of time points
will have to satisfy $\phi_{\infty}\left(X_{k}^{i},Y_{k}^{i}\right)$,
and any such subsequence can be extended arbitrarily due to the fact
that it is a fixed point indeed.
\end{proof}

\end{document}